%% file: arxiv.tex
\documentclass[runningheads]{llncs}

\usepackage{amsmath,amssymb}
\usepackage{booktabs}
\usepackage{graphicx}
\usepackage{threeparttable}
\usepackage{caption}
\usepackage{xcolor}
\usepackage[hidelinks]{hyperref}
\usepackage{subfiles}

\def\finalversion{}
\ifdefined\finalversion
    \newcommand{\edmark}[3]{}
\else
    \newcommand{\edmark}[3]{{\color{#2}\textbf{[#1: #3]}}}
\fi

\begin{document}

\title{Priority Gas Auction Cadence and Searcher Competition: Evidence from Flashblocks on Base}
\titlerunning{Priority Gas Auction Cadence and Searcher Competition}
\author{Pranay Anchuri, Edward W. Felten, Akaki Mamageishvili}
\authorrunning{Anchuri, Felten, Mamageishvili}
\institute{Offchain Labs}

\maketitle

\begin{abstract}
Flashblocks divide a block's priority gas auction into shorter sequential auctions that commit transaction order before the block is complete. We ask how this auction cadence affects bidding and competition among automated arbitrageurs, or searchers. In July 2025 Base replaced a single 2~s auction with ten 200~ms auctions.
We use this change to estimate the searcher response from on-chain data alone. On an address-day panel of $3{,}032$ searchers and $8{,}053$ activity-matched controls, a difference-in-differences design estimates a $0.187$~gwei fall in the effective priority fee ($59\%$ of the searcher pre-period mean). The share of searcher priority-fee value paid in the first tenth of block gas falls from $0.98$ to $0.28$. This fee compression is the identified effect. The revert-rate response is not (causally) identified and is reported descriptively. At the level of a single opportunity, the winning fee splits into an approximately invariant floor and a competitive premium. The premium scales about linearly with auction duration at high contention and less steeply at low contention. A first-price auction model with uncertain arrival and partial payments by losing searchers accounts for this pattern. We match attempts to the pool they contest. A higher-fee transaction reverts behind a lower-fee winner that landed earlier in $1.9$ to $2.7\%$ of contested opportunities by value, an upper bound. Latency therefore decides part of the ordering. A micro-auction model projects diminishing returns from shortening the window below 200~ms, as bids approach a nonzero floor. Rollups such as Arbitrum One now make auction cadence a governance tunable parameter, so these estimates inform how it is set.
\keywords{Priority gas auction \and Flashblocks \and Layer 2 \and MEV \and Difference-in-differences.}
\end{abstract}

\section{Introduction}
\label{sec:intro}
On Base, an Ethereum layer-2 rollup, the sequencer orders the transactions within a block in descending order of the priority fee per gas they bid, an ordering rule commonly referred to as {\it priority gas auction} (PGA). In July 2025 Base introduced {\it flashblocks}, streaming ten preconfirmation sub-blocks of 200~ms within each 2~s block. Each flashblock locks a transaction order as it is produced. The canonical block time stays at 2~s~\footnote{A 200~ms block time is planned to become canonical with an upcoming fork in October 2026: \url{https://docs.base.org/specifications/flashblocks}}. The change leaves the fee mechanism intact and alters only its frequency, converting a single per-block PGA into ten sequential 200~ms sub-auctions.

Automated arbitrageurs, or searchers, are the primary bidders in this auction and pay most of the priority fee. Decentralized-exchange transactions account for one fifth of transaction count on Base yet they pay $70\%$ of priority fees in June 2025 and $58\%$ in August 2025. The effect (if any) of restructuring the auction into faster sub-auctions on searcher bids, and on who wins, has not been measured. A single before-and-after comparison cannot answer it. The transition coincides with a market shift, so the pooled mean effective priority fee falls from $0.038$ to $0.025$~gwei for reasons that mix flashblocks with everything else.

Quantifying the searcher response is useful for three reasons. First, the priority fee that searchers pay is a cost of arbitrage that ordinary users ultimately bear through worse execution prices, so its level matters for them. Second, preconfirmation sub-blocks are being adopted across rollups, and their effect on the priority gas auction is a live design question rather than a settled one. Arbitrum One voted to replace Timeboost with a priority gas auction in September 2026, with a $125$~ms round length adjustable between $25$ and $250$~ms~\cite{arbitrum2026pga}. Auction cadence is therefore a governance tunable parameter. Third, it is not obvious whether faster sub-auctions raise or lower on-chain contention, which we examine descriptively.

We identify the effect of restructuring PGA with a difference-in-differences design on an address-day panel. For this analysis, active searchers form the treated group and a set of activity-matched but non-competitive addresses form the control group. August 2025 is the post period, compared against the pre-treatment period of June 2025. By using two-way fixed effects, the model absorbs every fixed difference between the groups and every shock common to both. Therefore, the identified coefficient recovers the searcher-specific response, provided the two groups would have moved together absent flashblocks. We test this parallel-trends assumption in Section~\ref{sec:ident}. Flashblocks leave no on-chain marker, so we reconstruct the 200~ms sub-block structure from the within-block priority-fee sequence. The reason for no on-chain marker is that flashblocks do not strictly order transactions by decreasing priority fee. Instead, the sequencer applies a heuristic~\footnote{Based on documentation (\url{https://docs.base.org/specifications/transactions/transaction-ordering}), the sequencer starts executing transactions in a sorted list of transactions in the flashblock, while it still accepts incoming transactions: if a newly arriving transaction pays a higher priority fee than the highest paying transaction in the remaining list, it is given priority, instead of being deferred to the next flashblock.}. We have no access to the sequencer, so every measurement in this paper is drawn from on-chain data alone.

\smallskip\noindent\textbf{Contributions.}
\begin{enumerate}
  \item \textbf{A flashblock reconstruction from on-chain data.} We develop a heuristic to recover the 200~ms sub-block
        boundaries from priority fee jumps within the block.  We validate the heuristic against the pre-flashblock month, where every block collapses to a single slice.
        \item \textbf{Causal estimates of the searcher response.} Under flashblocks the searcher effective priority fee falls by $0.187$~gwei ($0.168$ before panel weighting) relative to a matched control, $59\%$ of its pre-period level, and the declared bid by $0.228$~gwei ($0.183$ before weighting). The share of priority fees paid at the top of the block, an index defined in Section~\ref{sec:conc}, falls from $0.98$ to $0.28$.
        \item \textbf{Opportunity-level evidence on latency.} By matching competing transactions to the pool they contest, a higher-bidding transaction loses because of late arrivals in $1.9$ to $2.7\%$ of contested opportunities weighted by value, and $5.4$ to $6.6\%$ unweighted, both upper bounds. So, ordering within a block is decided partly by speed rather than price.
        \item \textbf{Regime stability and a shorter-window projection.} The block-level flashblock regime persists in October 2025 and June 2026, eleven months after launch. A small auction model fit to the two observed block times gives a rough bound below 200~ms, not a firm prediction, since no chain with PGA runs below 200~ms today~\footnote{Robinhood chain runs at 100~ms block times, but applies FCFS transaction ordering policy.}.
        \item \textbf{An opportunity-level bridge to auction theory.} We zoom into single opportunities. The winning fee splits into an approximately invariant floor and a contestable premium. The premium scales about linearly in the block time at high contention, and less steeply at low contention. The winner pays in full and the losers pay in part, an auction format between a winner-pays and an all-pay first-price auction.
\end{enumerate}

The primary observed causal effect is a large and robust compression of searcher bidding under flashblocks. However, the revert rate is not cleanly identified, so we report it descriptively rather than causally.

The rest of the paper is organized as follows. Section~\ref{sec:setting} covers the background and data, and Section~\ref{sec:model} the model and its predictions. Section~\ref{sec:methods} defines the outcomes, groups, flashblock reconstruction, and identification. Section~\ref{sec:results} reports the block structure, causal estimates, opportunity-level analysis, prediction tests, and regime stability. Section~\ref{sec:related} reviews related work, and Section~\ref{sec:conclusion} concludes.


\subfile{causal_v2}

\end{document}

%% file: causal_v2.tex
\section{Background}
\label{sec:setting}
In July 2025 Base introduced \emph{flashblocks}:
ten preconfirmation sub-blocks of 200~ms streamed within each 2~s block. The canonical block time stayed at 2~s. The intervention changes only the cadence of the priority gas auction (PGA): a single 2~s auction becomes ten sequential 200~ms sub-auctions. In this study, we compare June 2025 (pre) with August 2025 (post). We exclude July as the transition period. Two further post-intervention windows, October 2025 (three months after launch) and June 2026 (eleven months after launch), are used to test whether the August 2025 regime is transitional or the mature steady state (Appendix~\ref{sec:stability}).

\label{sec:data}
Transaction- and block-level data for June and August 2025 come from a Base archive node. The block data supply the base fee, from which we define the effective priority fee as $p = \texttt{gas\_price} - \texttt{base\_fee}$. We exclude system and deposit transactions ($\texttt{tx\_type} = 126$) from per-transaction rates and from the flashblock reconstruction. Appendix~\ref{app:fields} documents the extraction and fields. The later windows are processed identically.

Table~\ref{tab:data} summarizes the pre and the post months. Coverage is comparable: about $1.3$ million blocks and $293$--$296$ million transactions each, at roughly $224$ transactions per block, over $57.2$ million distinct sending addresses across the two months. Two aggregate shifts are already visible: the revert rate rises from $0.078$ to $0.114$ and the mean effective priority fee per gas falls from $0.038$ to $0.025$~gwei. These pooled movements mix searchers with everyone else and confound flashblocks with the July-to-August market shift. Our goal is primarily to isolate the part attributable to flashblocks.

\begin{table}[t]
  \centering
  \caption{Summary statistics by month. Per-transaction rates exclude system and deposit
  transactions.}
  \label{tab:data}
  \begin{tabular}{lrr}
    \toprule
    Statistic & June (pre) & August (post) \\
    \midrule
    Blocks                            & 1{,}294{,}626 & 1{,}333{,}894 \\
    Transactions                      & 292{,}876{,}642 & 295{,}937{,}009 \\
    Transactions per block            & 226 & 222 \\
    Revert rate                       & 0.078 & 0.114 \\
    Mean gas used                     & 275{,}332 & 227{,}206 \\
    Mean effective priority fee per gas (gwei) & 0.038 & 0.025 \\
    \bottomrule
  \end{tabular}
\end{table}

\section{Model and predictions}
\label{sec:model}
In this section, we connect the opportunity-level measures to two models of how searchers respond to flashblocks, so the aggregates of Section~\ref{sec:did} can be read as auction behavior.

The theoretical models support two flashblock-dependent predictions. First, arbitrage value per opportunity scales with block time. Under a Geometric Brownian Motion (GBM) with jumps for the underlying asset price, the arbitrage profit available on a pool over a window of length $B$ is given by
\begin{equation}
 V(B) \;=\; \kappa\,\sigma^2 B \;+\; J,
  \label{eq:sqrtt}
\end{equation}
where $\sigma$ is the price volatility, $\kappa$ a pool-specific constant, and $J$ a jump component that survives as $B \to 0$ because a discontinuous price move creates an arbitrage regardless of how short the window is,~\cite{bundi2026optimal,fritsch2024measuring}.

The first term is the diffusive part. Loss-versus-rebalancing (LVR) is a curvature cost, quadratic in the price move~\cite{milionis2022lvr}. Over a window of length $B$ the price moves by about $\sigma\sqrt{B}$. The value is the square of that move, so it scales as $\sigma^2 B$ (Appendix~\ref{app:blockvalue}). Nezlobin and Tassy~\cite{nezlobin2025lvr} derive the same loss for deterministic block times, the case that applies to Base.
Fritsch and Canidio~\cite{fritsch2024measuring} measure the block-time dependence empirically and report a 20 to 70\% reduction in arbitrage losses from 12~s to 100~ms. The decline is pair-dependent and flattens above the continuous limit. This supports the invariant term $J$ in \eqref{eq:sqrtt}. Reducing the flashblock length reduces $\kappa\sigma^2 B$ but not $J$. The winning fee sits below this value at low competition and rises toward it as more searchers compete. Its premium over the floor scales as $B^{\alpha}$, with $\alpha$ measured from about $0.5$ to $1.2$ across competitor counts (Section~\ref{sec:theory}).

Second, under the PGA a single opportunity is a sealed-bid first-price auction.
Suppose there are $n$ risk-neutral players and one item to buy. The item corresponds to the arbitrage opportunity. Each player draws a value independently from the same distribution function $F$, continuous on an interval $[\underline{v},\bar{v}]\in [0,\infty)$ with positive density $f$. Each player bids $b(v)$ on its value $v$. The winner is the highest bidder. The winner pays its bid. All losers pay a fraction $c$ of their bids. We solve for the symmetric equilibrium of this game, in which $b(\cdot)$ is increasing.

The block is then split into rounds. Each player arrives in the current PGA round with probability $p$ and in the next round with probability $1-p$. A player that moves to the next round loses to any player in the current round, regardless of bids. If all players move to the next round, they compete there on bids. Payments are unchanged: the winner pays its full bid and each loser pays the fraction $c$ of its bid.

\begin{proposition}\label{prop:equilibrium_bidding}
    Equilibrium bidding function is:
    \begin{equation*}
        b(v)=\frac{(n-1)\int_{\underline{v}}^{v}tf(t)[p^2(r+pF(t))^{n-2}+r^nF(t)^{n-2}]dt}{c+(1-c)[p(r+pF(v))^{n-1}+r^nF(v)^{n-1}]}.
    \end{equation*}
\end{proposition}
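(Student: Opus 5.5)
Throughout I read $r$ as $1-p$, the probability that a player slips to the next round; the statement uses $r$ without defining it. The plan is the standard route for symmetric first-price equilibria. Compute the interim winning probability of a player who bids as if its type were $z$. Write its expected payoff, which here includes partial payments by losers. Take the first-order condition at $z=v$, integrate it with a boundary condition at $\underline{v}$, and finally check that the first-order solution is a best response.

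\textbf{Step 1 (winning probability).} Assume all others use an increasing $b(\cdot)$, so $b(t)<b(z)$ iff $t<z$. Consider one player bidding $b(z)$ and split on its own arrival.
\begin{itemize}
\item With probability $p$ it is in the current round. It then wins iff each opponent is either in the next round (probability $r$) or in the current round with a lower bid (probability $pF(z)$). This gives $(r+pF(z))^{n-1}$.
\item With probability $r$ it is in the next round. It then wins only if every opponent is also in the next round and bids lower, which gives $r^{n-1}F(z)^{n-1}$.
\end{itemize}
Hence
\begin{equation*}
P(z)=p\,(r+pF(z))^{n-1}+r^{n}F(z)^{n-1},
\end{equation*}
which is exactly the bracket in the denominator. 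Differentiating,
\begin{equation*}
P'(z)=(n-1)f(z)\bigl[p^{2}(r+pF(z))^{n-2}+r^{n}F(z)^{n-2}\bigr],
\end{equation*}
which is the factor multiplying $t$ in the numerator's integrand. Ties have probability zero because $F$ is continuous and $b$ is strictly increasing.

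\textbf{Step 2 (payoff and FOC).} A winner pays $b(z)$ and a loser pays $c\,b(z)$. The expected payoff of a type-$v$ player mimicking type $z$ is therefore
\begin{equation*}
U(v,z)=v\,P(z)-b(z)\bigl[c+(1-c)P(z)\bigr].
\end{equation*}
Setting $\partial_z U(v,z)|_{z=v}=0$ gives the ODE
\begin{equation*}
\frac{d}{dv}\Bigl(b(v)\bigl[c+(1-c)P(v)\bigr]\Bigr)=v\,P'(v).
\end{equation*}
Integrate from $\underline{v}$ to $v$ with the boundary condition $b(\underline{v})=0$. Dividing by $c+(1-c)P(v)$ then gives the displayed formula. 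This division is legitimate because the quantity is positive whenever $c>0$ or $p<1$.

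\textbf{Step 3 (the delicate point).} The usual justification for the boundary condition is weaker here. Since $P(\underline{v})=p\,r^{n-1}>0$, the lowest type wins with positive probability even at a zero bid. The boundary must instead come from an optimality argument. Type $\underline{v}$ gets $\underline{v}P(\underline{v})-b(\underline{v})[c+(1-c)P(\underline{v})]$. Raising its bid above the infimum of others' bids does not increase its win probability but strictly increases its payment. So in a symmetric increasing equilibrium it bids $0$; I would make this precise by a short contradiction.

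\textbf{Step 4 (sufficiency).} Substitute the ODE into $\partial_z U(v,z)$ to get
\begin{equation*}
\partial_z U(v,z)=(v-z)P'(z).
\end{equation*}
This is positive for $z<v$ and negative for $z>v$, so $z=v$ is a global maximizer. This is the standard single-crossing argument. I would also verify that $b$ is increasing by checking that its derivative is proportional to the positive quantity $P'(v)$ times a nonnegative expression. I expect this monotonicity check, together with the boundary argument in Step 3, to be the main obstacle rather than the integration itself.
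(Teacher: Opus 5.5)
Your proposal is correct and follows the paper's proof essentially step for step. It uses the same decomposition of the winning probability by own arrival round, giving $Q(x)=p[r+pF(x)]^{n-1}+r^nF(x)^{n-1}$, the same payoff $vQ(x)-(c+(1-c)Q(x))b(x)$, the same first-order ODE, and integration from $\underline{v}$ with $b(\underline{v})=0$. Your Steps 3 and 4 go beyond the paper, which simply imposes $[c+(1-c)q_0]b(\underline{v})=0$ with $q_0=pr^{n-1}$ and never checks global optimality or monotonicity of $b$. Both additions are sound: the deviation-to-zero argument for the boundary is cleaner if phrased as ``every bid up to $b(\underline{v})$ wins with the same probability $pr^{n-1}$, so a positive $b(\underline{v})$ is strictly beaten by bidding $0$,'' and the single-crossing identity $\partial_z U(v,z)=(v-z)P'(z)$ gives global optimality.
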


The numerator is the expected value won, and the denominator is the expected fraction of the bid paid. Loser payments therefore depress bids relative to a winner-pays auction. Both the value distribution $F$ and the current-round arrival probability $p$ vary with $B$. Longer windows admit larger price dislocations and more timely arrivals, so the model predicts lower revenue as $B$ falls. We find that the revenue of the auction is increasing in the probability of getting included in the current block, as long as the latter is larger than a threshold. Interestingly, the threshold does not depend on the value distribution function, but only on the number of bidders $n$.

\begin{proposition}\label{prop:revenue_monotonicity}
    The revenue of the auction is increasing whenever $p\geq \frac{n}{n+2}$.
\end{proposition}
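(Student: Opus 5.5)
The plan is to avoid deriving the bids directly. Instead I would express the revenue through the interim winning probability and reduce monotonicity in $p$ to a pointwise inequality in the quantile variable that does not involve $F$. Throughout, $r=1-p$.

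\emph{Winning probability and payments.} A bidder of type $v$, facing opponents who use the increasing equilibrium bid of Proposition~\ref{prop:equilibrium_bidding}, wins in two cases. Either it arrives in the current round and each opponent is in the next round or bids lower, or everybody arrives in the next round and it has the highest value. Hence
\[
G(v)=p\,(r+pF(v))^{n-1}+r^{n}F(v)^{n-1},
\]
which is exactly the bracket in the denominator of Proposition~\ref{prop:equilibrium_bidding}. Its derivative is
\[
G'(v)=(n-1)f(v)\bigl[p^2(r+pF(v))^{n-2}+r^nF(v)^{n-2}\bigr].
\]
The expected payment is $m(v)=b(v)\,(c+(1-c)G(v))$. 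The envelope/payoff-equivalence argument with $m(\underline v)=0$ gives $m(v)=\int_{\underline v}^v tG'(t)\,dt$, which reproduces the proposition. The loser-payment fraction $c$ therefore drops out of revenue.

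\emph{Revenue as a quantile integral.} Swapping the order of integration gives
\[
R(p)=n\int m\,dF=n\int_{\underline v}^{\bar v} t\,G'(t)\,(1-F(t))\,dt .
\]
Substituting $x=F(t)$ and writing $t(x)$ for the (nondecreasing, nonnegative) quantile function,
\[
R(p)=n(n-1)\int_0^1 t(x)\,k(x,p)\,dx,\qquad k(x,p)=(1-x)\bigl[p^2(1-p(1-x))^{n-2}+(1-p)^n x^{n-2}\bigr].
\]
Differentiating under the integral, $R'(p)=n(n-1)\int_0^1 t(x)\,g(x)\,dx$ with $g=\partial_p k$. Integrating by parts in Stieltjes form gives
\[
\int_0^1 t\,g\,dx=t(0)\int_0^1 g\,dx+\int_0^1\Bigl(\int_x^1 g\Bigr)\,dt(x).
\]
So $R'(p)\ge 0$ for every $F$ as soon as the tail integrals satisfy $T(x):=\int_x^1 g(s,p)\,ds\ge 0$ for all $x\in[0,1]$. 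This sufficient condition is free of $F$, which is consistent with the claim that the threshold depends only on $n$.

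\emph{The key inequality, which I expect to be the main obstacle.} Set $y=1-s$. The first part of $g$ is
\[
\partial_p\bigl[y\,p^2(1-py)^{n-2}\bigr]=y\,p\,(1-py)^{n-3}\bigl(2-npy\bigr).
\]
The second part is $-n(1-p)^{n-1}s^{n-2}(1-s)$. Since $T(x)=\partial_p\int_x^1 k\,ds$, I would compute $\int_x^1 k\,ds$ in closed form: the first term integrates via $u=1-py$, and the second is an incomplete Beta polynomial. I would then differentiate in $p$. The negative contributions are two. The second term is always negative. The first term is negative where $y>2/(np)$, i.e.\ near $s=0$; this region is nonempty only when $p>2/n$.

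I would first check the two extremes. At $x=1$, $T=0$ trivially. At $x=0$, $T(0)=\frac{d}{dp}\int_0^1 k\,ds$, where $\int_0^1 k\,ds$ is an explicit polynomial in $p$. I expect the condition $T(0)\ge0$ to reduce to a factor of the form $(n+2)p-n\ge 0$. The reason is that the Beta integrals $\int y(1-py)^{n-2}$ and $\int s^{n-2}(1-s)=\frac{1}{n(n-1)}$ produce the denominators $n$ and $n+2$.

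For intermediate $x$, I would argue that $T$ is unimodal, or at least has no interior negative minimum. The integrand $g(\cdot,p)$ changes sign at most once in $s$: it is negative on an initial segment and positive afterwards, because the factor $(2-npy)$ is monotone in $y$ and the negative second term is dominated for $s$ near $1$ once $p$ is large. Given a single sign change from negative to positive, $T(x)=\int_x^1 g$ is minimized at $x=0$. So $T(0)\ge0$ suffices, which gives the stated threshold $p\ge n/(n+2)$.

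If the single-crossing claim fails, my fallback is to bound the negative terms directly using $p\ge n/(n+2)$, which makes $(1-p)^{n-1}\le (2/(n+2))^{n-1}$ very small.
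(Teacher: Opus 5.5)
There is a genuine gap, at the step you flagged. Your reduction (payoff equivalence, quantile substitution, Stieltjes integration by parts, sufficiency of $T(x)\ge 0$ for all $x$) matches the paper's and is fine. The problem is that your plan misplaces where the threshold comes from.

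\textbf{The endpoint $x=0$ does not produce the threshold.} Your expectation that $T(0)\ge 0$ reduces to $(n+2)p\ge n$ is false. Exactly one bidder always wins, so $n\int G\,dF=1$, and therefore $n(n-1)\int_0^1 k\,dx=n\int(1-F)\,dG=1-nG(\underline v)=1-np(1-p)^{n-1}$. Differentiating in $p$,
\[
T(0)=\frac{(1-p)^{n-2}(np-1)}{n-1},
\]
which is already nonnegative for $p\ge 1/n$. So $x=0$ is not the binding point, and $T(0)\ge0$ is not sufficient.

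\textbf{Single crossing carries all the weight and fails below the threshold.} Your only support for it is a heuristic about $g$ near $s=1$, which says nothing about uniqueness of the crossing. Take $n=3$, $p=0.4$ and $Q=1-x$. Then $T(x)=Q^2(0.2Q-0.14)$, so $T(0)=0.06>0$ while $T<0$ on $(0.3,1)$. Here $g$ is positive near $s=0$ and negative near $s=1$, the reverse of the pattern you assume. The binding region is $x\to1$, where the positive and negative parts of $T$ both vanish like $(1-x)^2$. Your check ``$T(1)=0$ trivially'' discards exactly that information. Even if single crossing could be proved for $p\ge n/(n+2)$, the threshold would have to enter through that lemma, and your plan does not supply it.

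\textbf{The missing idea is to evaluate $T(x)$ for every $x$.} Because $yp(1-py)^{n-3}(2-npy)=\frac{d}{dy}\bigl[py^2(1-py)^{n-2}\bigr]$, the closed form you intended to compute is
\[
T(x)=pQ^2(1-pQ)^{n-2}-n(1-p)^{n-1}\int_0^Q q(1-q)^{n-2}\,dq,\qquad Q=1-x .
\]
Both terms carry the factor $Q^2$, so they can be compared uniformly in $x$. Using $\int_0^Q q(1-q)^{n-2}\,dq\le Q^2/2$ and $(1-pQ)^{n-2}\ge(1-p)^{n-2}$ gives
\[
T(x)\ge Q^2(1-p)^{n-2}\bigl[p-\tfrac n2(1-p)\bigr].
\]
This lower bound is nonnegative for every $x$ precisely when $p\ge n/(n+2)$. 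That is the paper's argument. It needs no single-crossing property, and it shows where $n/(n+2)$ actually comes from.

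Your fallback of exploiting the smallness of $(1-p)^{n-1}$ points in this direction. Without isolating the common $Q^2$ factor, however, it cannot control $T$ near $x=1$, where the positive term is itself vanishing.
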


However, point-wise bidding function $b(v)$ does not necessarily increase in $p$, as it enters both numerator and denominators of its calculation,~\eqref{bid_as_function_of_p}. Indeed, there are examples where bidding does not increase in increasing $p$. 
Appendix~\ref{app:proofs} contains the example described above and proofs of propositions in this section.

\section{Methods}
\label{sec:methods}

\subsection{Outcomes and identification}
\label{sec:estimand}

Let a transaction have effective priority fee per gas $p = \texttt{gas\_price} - \texttt{base\_fee}$, declared bid $b = \texttt{max\_priority\_fee\_per\_gas}$, gas used $g$, and a revert indicator. We study three families of outcomes $Y$:
\begin{itemize}
  \item \textbf{Bidding:} the effective priority fee $p$ and the declared bid $b$.
  \item \textbf{Contention:} the revert rate and the gas burned by reverted transactions
        (deadweight).
  \item \textbf{Fee concentration:} the block-level index $C$ of Section~\ref{sec:conc}, the
        share of within-block priority-fee value paid at the top of the block.
\end{itemize}

The unit of an analysis is an address $i$ on day $t$. Each address belongs to a group $G_i \in \{S, C\}$,
searchers ($S$) or a control group ($C$), and each day to a regime $R_t \in \{\text{pre},
\text{post}\}$. Only searchers in the post regime are treated. Writing $Y_{it}(1)$ and $Y_{it}(0)$
for the treated and untreated potential outcomes, the target is the average treatment effect on
the treated (ATT),
\begin{equation}
  \tau \;=\; \mathbb{E}\big[\,Y_{it}(1) - Y_{it}(0)\ \big|\ G_i = S,\ R_t = \text{post}\,\big].
  \label{eq:att}
\end{equation}

\label{sec:ident}
Base is a single chain treated at one instant, so a before/after difference confounds the
flashblock effect with the larger July-to-August market shift. Including a
block-time term in the regression does not fix this: block time takes one value per month, so it is collinear with a month indicator and its coefficient is the confounded June/August contrast.

We use a control group for identification. More specifically, searchers are the treated group and we use a group of
comparable non-competitive addresses (Section~\ref{sec:groups}) untreated by flashblocks as the control group. The
estimator is the interaction coefficient in a two-way fixed-effects regression on the address-day
panel,
\begin{equation}
  Y_{it} \;=\; \tau\,(S_i \cdot P_t) \;+\; \alpha_i \;+\; \delta_t \;+\; \varepsilon_{it},
  \label{eq:twfe}
\end{equation}
where $S_i = \mathbf{1}[G_i = S]$, $P_t = \mathbf{1}[R_t = \text{post}]$, $\alpha_i$ are address
effects, and $\delta_t$ are day effects. The address effects absorb $S_i$ and every
time-invariant difference between the groups; the day effects absorb $P_t$ and every shock common
to both groups (ETH price, volatility, base fee). 

The coefficient $\tau$ equals the estimand \eqref{eq:att} under the assumption that,
absent treatment, the two groups (treated and the control) would have moved together:
\begin{equation}
  \mathbb{E}\big[\,Y_{it}(0)\ \big|\ S,\text{post}\,\big] - \mathbb{E}\big[\,Y_{it}(0)\ \big|\ S,\text{pre}\,\big]
  \;=\;
  \mathbb{E}\big[\,Y_{it}(0)\ \big|\ C,\text{post}\,\big] - \mathbb{E}\big[\,Y_{it}(0)\ \big|\ C,\text{pre}\,\big].
  \label{eq:pt}
\end{equation}
The above condition is on \emph{changes}: $\alpha_i$ already removes any fixed gap between the
groups, so only their untreated pre-to-post movements must match. The condition fails if a shock hits
the groups differentially across the two regimes. We check it with two auxiliary regressions. An event
study interacts the searcher indicator with each day relative to the last pre-period day; \eqref{eq:pt}
requires the pre-period interactions to vanish, and the post-period interactions average to $\tau$
(Figure~\ref{fig:did-eff}). A linear pre-trend regression on the pre-period days alone requires a zero
searcher-by-day slope. A nonzero value in either means the two groups were already diverging before
treatment.

The control is matched on activity, not on fees, and pays far below the searchers
(Table~\ref{tab:groups}). This gap is what we study, not a flaw, since searchers are the ones
paying the top-of-block premium. The design needs the two groups' fees to change by the same amount
absent flashblocks, not to start at the same level. The risk from a large gap is the choice between an
equal-amount and an equal-percentage comparison. A shock that moves fees by a fixed percentage would
move the searcher fee much more than the near-zero control fee, and the day effects, which subtract a
fixed amount, would not remove it. Two things limit this. The fees step down at the July launch. They do not drift apart before it
(Figure~\ref{fig:did-eff}). June and August had comparable volatility, $66\%$ against $70\%$ annualized.
A volatility gap between the two months is therefore not driving the estimate. The near-zero control fee
still makes an equal-percentage trend hard to check.

\subsection{Groups and classification}
\label{sec:groups}

Addresses are labeled from behavior aggregated over both months. The classification is strict. A looser top-percentile cut would mix genuine searchers with ordinary high-frequency addresses.
Both the searchers and the control group clear a frequency bar at the 99.9th percentile of the per-address transaction count over the two months
($n^\star = 405$ transactions), and searchers must be competitive on \emph{both} markers: a revert
rate at or above $r^\star = 0.20$ \emph{and} a maximum declared priority fee at or above the 95th
percentile ($b^\star = 0.011$~gwei). We only consider addresses active in both months.\footnote{Temporary addresses used only briefly cannot meet both restrictions: 405 transactions and activity in both months. So, they do not enter the searcher or control groups.} The groups are
\begin{align*}
  \text{searcher} \;&:\; n \ge n^\star \ \text{and}\ \big(r \ge r^\star \ \text{and}\ b_{\max} \ge b^\star\big), \\
  \text{control (activity-matched)} \;&:\; n \ge n^\star \ \text{and}\ r < r^\star \ \text{and}\ b_{\max} < b^\star, \\
  \text{retail (placebo)} \;&:\; n < 45 \ \text{and}\ r < 0.05.
\end{align*}
Searchers and activity-matched controls clear the same activity threshold but differ on the two
competitive markers. Requiring both markers isolates a competitive core at the cost of coverage.
Because the screen conditions on revert rate and maximum bid, the estimates apply to this selected
core rather than to a population fixed before treatment. Flat pre-trends do not rule out selection,
but rescreening on June alone recovers the same fee and bid compression (Section~\ref{sec:did}).

The disjoint retail placebo contains addresses with fewer than 45 transactions and a revert rate
below 0.05. These low-frequency addresses should respond little to flashblocks. Table~\ref{tab:groups}
reports the group counts and means. The retail mean priority fee exceeds that of the activity-matched
control because the control is also screened on bidding.

\begin{table}[t]
  \centering
  \caption{Address groups: counts and behavioral means over both months.}
  \label{tab:groups}
  \begin{tabular}{lrrr}
    \toprule
    Group & Addresses & Mean revert rate & Mean priority fee (gwei) \\
    \midrule
    Searcher                 & 3{,}032   & 0.748 & 0.127 \\
    Control (activity-matched) & 8{,}053 & 0.028 & 0.001 \\
    Retail (placebo)         & 1{,}322{,}706 & 0.001 & 0.021 \\
    \bottomrule
  \end{tabular}
\end{table}

\subsection{Flashblock reconstruction}
\label{sec:recon}

Flashblocks are not distinct on-chain blocks, so we recover the 200~ms structure from within-block
fee ordering. During each window, the sequencer draws from a descending-fee queue while continuing
to accept new arrivals. The fee sequence therefore forms descending runs interrupted by upward jumps.
Order a block's transactions by execution position and let $p_i$ and $s_i$ be the effective priority fee and sender of the $i$-th. An \emph{upward jump} is a pair with $p_i > p_{i-1}$. We discard same-sender jumps ($s_i = s_{i-1}$) as they are mechanical and forced by nonce ordering. A \emph{cross-sender} upward jump ($s_i \neq s_{i-1}, p_i > p_{i-1}$) cannot arise from one static price-and-nonce heap, which always pops the highest-fee ready transaction. It marks an ordering reset. A reset is either a flashblock boundary or a mid-window live preemption, where a late higher-fee arrival is served ahead of the remaining queue (Appendix~\ref{sec:livepull}). We count resets, so the slice count approximates rather than bounds the number of flashblock boundaries. The live-pull appendix separates the two. As a falsification, the pre-flashblock month must collapse to one slice per block.

\subsection{Fee-concentration index}
\label{sec:conc}

Let transaction $j$ pay priority-fee value $\phi_j = p_j g_j$ and hold cumulative-gas position
$\gamma_j = G_b^{-1}\sum_{k \le j} g_k$, where $G_b$ is the block's total gas. The index is
\begin{equation}
  C_b \;=\; \frac{\sum_{j:\,\gamma_j \le 0.1}\phi_j}{\sum_j \phi_j},
  \label{eq:conc}
\end{equation}
the share of within-block priority-fee value paid in the first gas-decile (the block top, in execution order). Because the cutoff is a gas share and $\phi_j$ is gas-weighted, the gas factors
cancel and
\begin{equation}
  10\,C_b \;\approx\; \frac{\bar p_{\text{top}}}{\bar p_{\text{block}}},
  \label{eq:conc-rate}
\end{equation}
the ratio of the gas-weighted mean fee rate in the top decile to the block-wide mean rate. $C_b$
therefore measures fee-\emph{rate} concentration: a large transaction raises it only through its
rate, and one paying the block-average rate leaves $C_b$ at its neutral value $0.1$. A single
auction seats the highest rates at the top, $C_b \to 1$; ten sub-auctions recur high rates through
the block, $\bar p_{\text{top}} \to \bar p_{\text{block}}$ and $C_b \to 0.1$. If every transaction paid the same
per-gas rate, $C_b = 0.1$. Concentrating the highest rates at the block top drives it toward $1$.

\section{Results}
\label{sec:results}

\subsection{Block structure}
\label{sec:micro}

The reconstruction of Section~\ref{sec:recon} confirms the regime change directly.
Pre-flashblocks, every block is a single
fee-sorted slice with no cross-sender resets; post-flashblocks, blocks fragment into a mean of
$19.5$ inferred slices, $94.4\%$ of upward fee jumps are cross-sender, and fee concentration $C$
falls from $0.89$ to $0.27$, with $98.3\%$ of blocks exceeding ten inferred slices post-flashblocks
against a single slice pre-flashblocks. Figure~\ref{fig:fee} shows the fee-position distribution. The
slice count exceeds ten because ordering locks incrementally within each 200~ms window, so it measures ordering fragmentation rather than an exact sub-block count. Whether this reflects searchers timing their submissions within the window is addressed in Appendix~\ref{sec:hyp}. The builder also preempts the sorted queue mid-window, a live priority-queue pull documented in Appendix~\ref{sec:livepull}.

The redistribution of fees persists in the later post-launch months (Figure~\ref{fig:fee}).
Across all senders, fee concentration $C$ is $0.27$ in October 2025 and $0.36$ in June 2026,
compared with $0.89$ before flashblocks and $0.27$ in August 2025.
Sequencing continues to change within this regime. Mean inferred slices per block decline
from $19.5$ in August to $12.0$ in October and $10.0$ in June 2026.
The share of cross-sender jumps classified as returns falls from $33.2\%$ to $19.8\%$ and $0.4\%$,
respectively. These measurements establish persistence in block structure.
Comparisons of searcher fee levels across months remain confounded by market conditions and
changes in the screened population, so they do not identify a long-run causal effect.
Appendix~\ref{sec:stability} reports the detailed comparison.

\begin{figure}[t]
  \centering
  \includegraphics[width=0.80\textwidth]{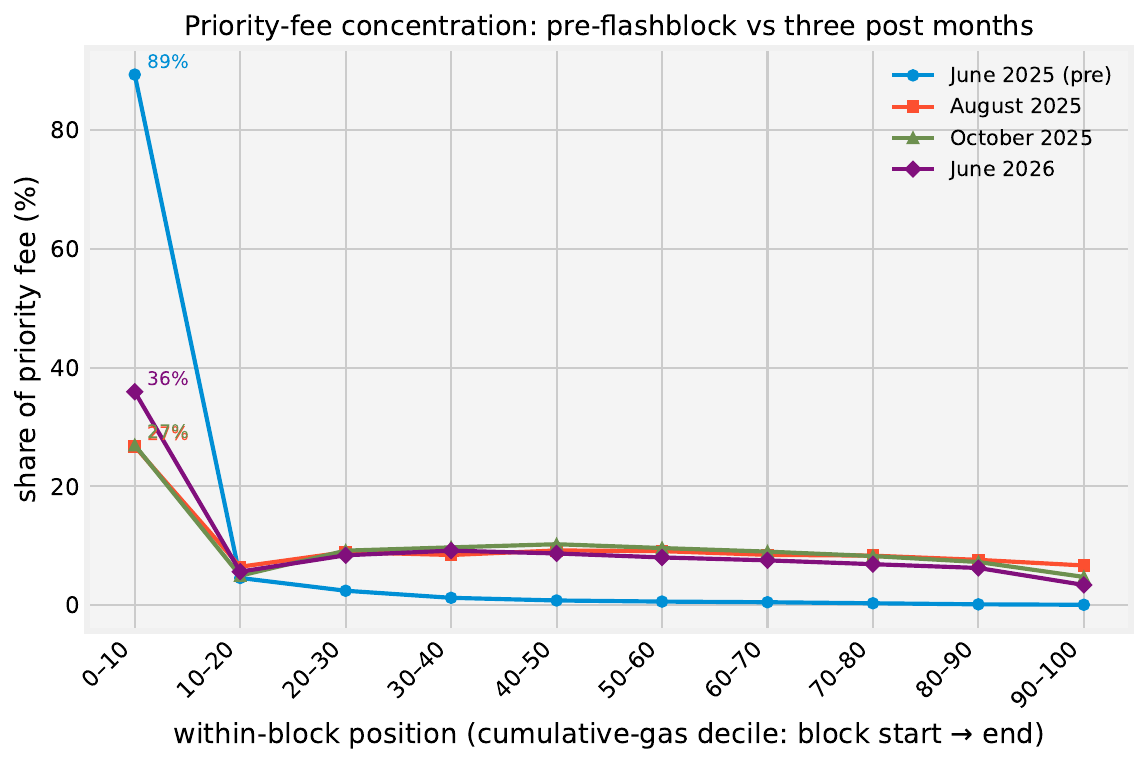}
  \caption{Share of within-block priority-fee value by cumulative-gas decile in June 2025,
  August 2025, October 2025, and June 2026. The pre-flashblock concentration at the block top
  remains dispersed across the block in all three post-launch months.}
  \label{fig:fee}
\end{figure}

\subsection{Causal estimates}
\label{sec:did}

Equation~\eqref{eq:twfe} is estimated on the address-day panel of $3{,}032$ searchers and
$8{,}053$ activity-matched controls over $61$ days (Section~\ref{sec:groups}).
Table~\ref{tab:did} reports the within-group pre-to-post changes, fixed-effects estimates
with address-clustered standard errors, and June pre-trend tests.\footnote{The DiD column reports the fixed-effects estimate from
Equation~\eqref{eq:twfe}. Panel weighting can make it differ from the raw difference
between the two change columns (Appendix~\ref{app:robustness}).}

The effective priority fee falls by $0.187$~gwei more for searchers than controls,
equivalent to $59\%$ of the pre-period mean of $0.317$~gwei for the competitive core
active in both months.\footnote{Fee levels and concentration measures use different
populations across analyses and are not directly comparable (Appendix~\ref{app:robustness}).}
The declared bid falls by $0.228$~gwei more. The June pre-trend $p$-values are $0.79$
and $0.98$, respectively. With an inclusion opportunity every 200~ms, a transaction no longer outbids the whole block to reach the
top, so realized fees and the bids that set them decline for the population that paid the
top-of-block premium. Fee concentration gives the largest effect: $C$ falls from $0.98$ to $0.28$
for searchers against $0.07$ to $0.09$ for controls, $\hat\tau = -0.72$. Its June pre-trend is flat ($p = 0.45$), consistent with parallel trends, though it is measured on a coarse group-hour panel where the test has low power. The per-group $C$ takes the block's top gas-decile as the cut, then measures the share of that group's within-block priority-fee value falling in it.

Figure~\ref{fig:did-eff} shows the fee event study: each point is the searcher-minus-control fee
gap on a day, relative to the last pre-period day. The gap is flat and centered on zero through the
pre-period ($p = 0.79$), then steps down across the post-period, consistent with a level shift at
the July intervention rather than a pre-existing divergence.

The outcomes revert rate, overbid, and transaction count are not cleanly identified. The revert-rate
estimate is positive ($\hat\tau = +0.021$) and significant, but the strict searcher classification conditions
on a high revert rate, so the treated group is selected on that very outcome and its pre-period
differential slope is nonzero ($p < 0.001$); The value in equation~\eqref{eq:pt} fails and the coefficient is descriptive, not causal. Overbid and transaction count likewise carry nonzero pre-trends. Figure~\ref{fig:did-revert} makes this visible: the revert-rate gap scatters widely through the pre-period rather than holding constant.

\begin{table}[t]
  \centering
  \caption{Difference-in-differences against the activity-matched control. $^{*}$ marks the
  difference-in-differences estimate significant at $p < 0.01$ (from its address-clustered standard   error); the separate June pre-trend $p$-value, for which a large value ($> 0.05$) supports the identification assumption \eqref{eq:pt}. Fee outcomes in gwei. Overbid, revert rate, and transactions per address-day have nonzero pre-trends ($p < 0.001$). Their estimates are descriptive, not causal.}
  \label{tab:did}
  \input{did_results}
\end{table}

\begin{figure}[t]
  \centering
  \includegraphics[width=0.86\textwidth]{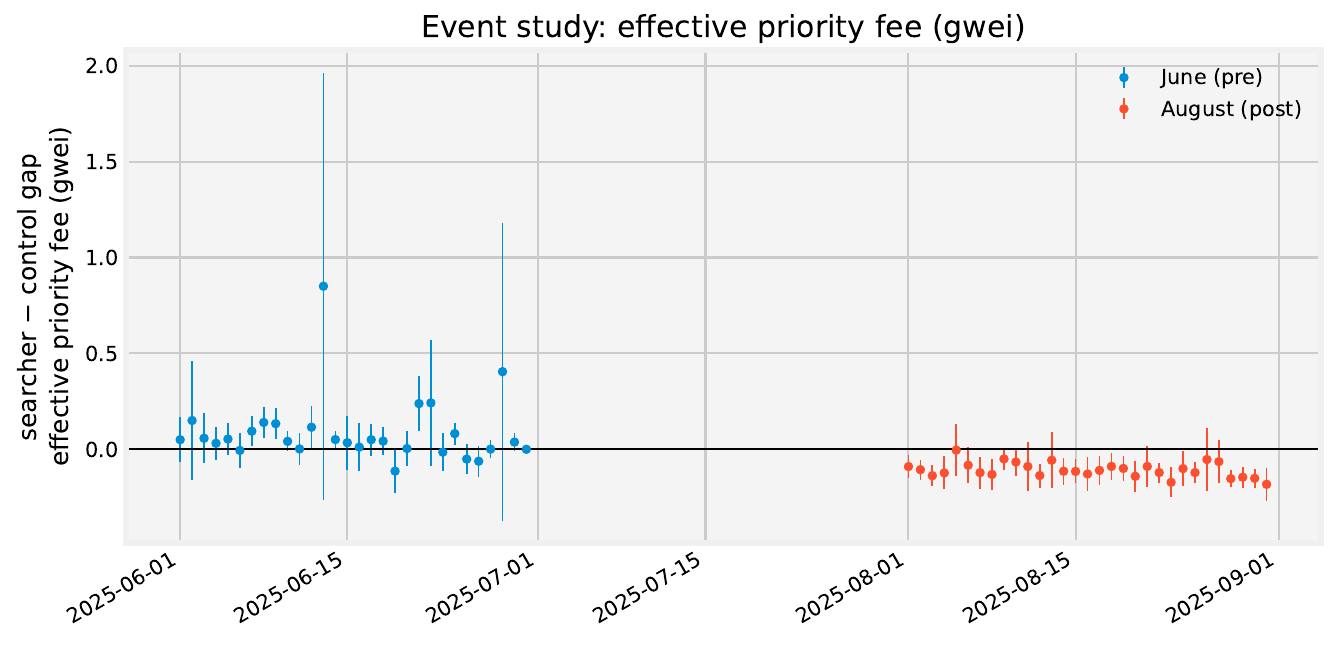}
  \caption{Event study for the effective priority fee: searcher-minus-control gap by day,
  relative to the last pre-period day, with 95\% intervals. The pre-period gap is roughly stable
  and closes across the post-period.}
  \label{fig:did-eff}
\end{figure}

The searcher screen uses revert rate and top bids over both months. So an address can enter the treated group on its post-flashblock behavior. We rescreen on June alone. The thresholds are the June 99.9th frequency percentile ($339$ transactions) and the 95th bid percentile ($0.006$~gwei), giving $2{,}436$ searchers and $4{,}158$ controls. The effective fee drops by $0.120$~gwei and the bid by $0.165$~gwei, both significant with flat pre-trends, $64\%$ and $72\%$ of the main estimates. This removes selection on post-period bids. It does not remove mean reversion. The treated group is picked on high June bids and the control on low ones, so the two drift together on their own. A flat pre-trend slope does not rule this out. 

The fee and concentration estimates retain their signs under the alternative retail control (Appendix~\ref{app:robustness}).

\subsection{Opportunity-level competition and late arrival}
\label{sec:oppmatch}

The address-day results in Section~\ref{sec:did} measure fee changes relative to controls. We next compare bids and execution order among attempts at the same arbitrage opportunity.
Attempts are matched by the underlying liquidity pool, since contract addresses and method selectors do not uniquely identify competitors.

Successful trades reveal pools through decoded \texttt{Swap} events. An opportunity is a pool dislocation delimited by a successful capture. Several opportunities can occur within one 2~s block.
Reverted transactions are assigned pools using their contract's successful trades.
This attribution covers $96\%$ of core reverted fee value in August 2025.
Appendix~\ref{app:oppmatch} provides the implementation details.

In August 2025, $766{,}510$ captures have at least one linked reverting attempt.
A higher-fee attempt reverts one reconstructed slice after its linked winner in $5.4$--$6.6\%$ of these contested captures. The range spans dominant-pool and proportional attribution rules. 
Weighting by the winner's realized priority-fee value gives $1.9$--$2.7\%$.

These rates are upper bounds on latency losses. Reverts can arise from slippage or stale opportunities, and pool attribution can misassign multi-pool searchers.
The $96\%$ coverage figure does not bound attribution error.
Reconstructed slices can split a single flashblock (Section~\ref{sec:recon}).
Appendix~\ref{app:oppmatch} reports the two-slice comparison, bid gaps, and pre-period check.
Appendix~\ref{sec:hyp} reports additional transaction-level tests.

\subsection{Testing the predictions}
\label{sec:theory}
At the opportunity level, let $c$ denote the number of distinct core searchers contesting a pool within a 2~s block. This pool-block unit differs from the single captures used in the late-arrival analysis of Section~\ref{sec:oppmatch}. Competition is thin: $81.4\%$ of opportunities in June and $78.0\%$ in August contain one searcher, while the respective means are $1.41$ and $1.56$. Captures per pool-block increase from $1.07$ to $1.22$. Appendix~\ref{app:oppmatch} reports the full distribution. We therefore model each contest as an auction with a small number of participants.

Figure~\ref{fig:betac} plots the winning fee $\beta(c)$ in both regimes. The uncontested floor is similar across regimes, at $0.177$~gwei in June and $0.164$~gwei in August. We treat it as approximately invariant because the $7.9\%$ difference is not tested. The premium above this floor changes more sharply with auction duration. With comparable monthly volatility, the implied block-time exponent rises from $0.51$ at two searchers to $0.96$--$0.97$ at four and five searchers, and $1.12$ beyond five. The premium therefore approaches linear scaling in $B$ as competition increases and is closer to $\sqrt{B}$ under low contention, consistent with the value scaling derived in Appendix~\ref{app:blockvalue}.

The premium is also hard to separate from opportunity size, since contests with more searchers tend to be more valuable, so $\beta(c)$ partly measures value and not just competition. The check in Appendix~\ref{sec:routeb} reduces this but uses gas as a rough proxy for value. That check covers only the 200~ms data, so the cross-regime ratio above uses the raw $\beta(c)$. The longer 2~s window lets the price dislocate further, so the largest opportunities draw the most searchers. Latency also thins the field at 200~ms, where some searchers cannot compete.
Note that the fee is a lower bound on the arbitrage value, so this is a consistency check on \eqref{eq:sqrtt} and not a direct measurement of arbitrage value.

Partial loser payments are consistent with the first-price model. Reverting losers account for $11.4\%$ of total fee value on contested captures, and the median loser pays $0.15$ times the winner's fee value. Appendix~\ref{app:oppmatch} describes the matching and gas accounting behind these measures. Because we do not fit the equilibrium bid in \eqref{eq:bid} to $\beta(c)$, this evidence is qualitative.

The diffusion-plus-jump decomposition also predicts diminishing returns below 200~ms: the diffusion premium and within-window competition shrink with $B$, while the jump term $J$ remains.
Section~\ref{sec:shorter} quantifies this projection.

\begin{figure}[t]
  \centering
  \includegraphics[width=0.92\textwidth]{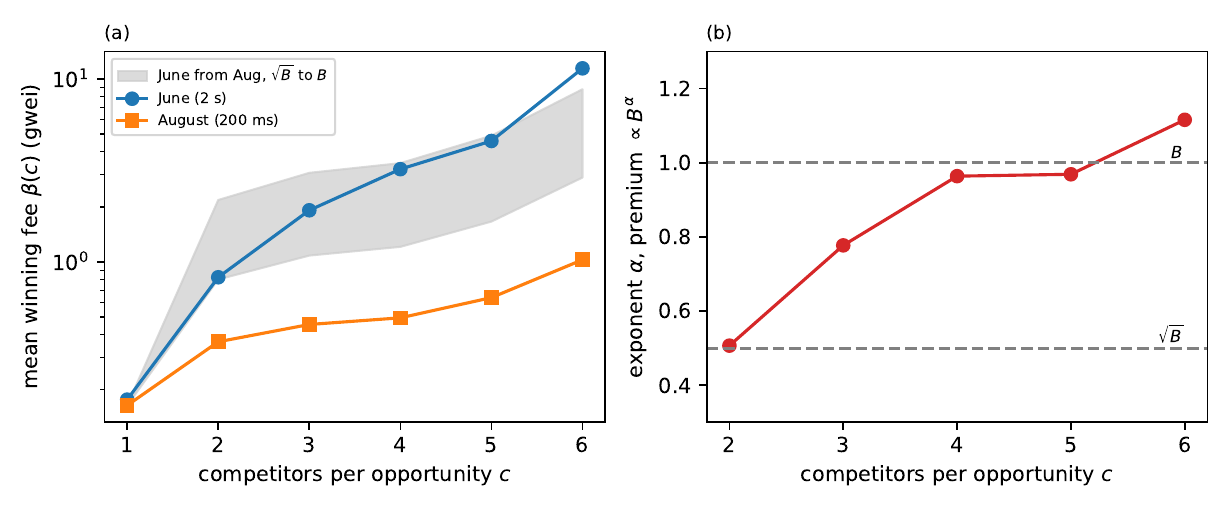}
  \caption{Winning fee premium by competitor count in the two regimes. (a) Mean winning
  fee $\beta(c)$ on a log scale; the floor $\beta(1)$ is nearly the same across regimes. The shaded band is the
  June premium implied from the August premium under scalings between $\sqrt{B}$ and $B$. June sits at
  the $\sqrt{B}$ edge at two searchers, crosses to the $B$ edge by four to five, and exits above the
  band at $c \ge 6$. (b) The implied block-time exponent $\alpha$ of the premium, with premium
  $\propto B^{\alpha}$, climbs from $0.51$ to $1.12$ and crosses the linear-$B$ line between four and five
  searchers.}
  \label{fig:betac}
\end{figure}

\subsection{Shorter flashblock windows}
\label{sec:shorter}
A natural policy question for the rollup operator is whether reducing the flashblock below 200~ms would compress searcher bids further. We answer it with a micro-auction model, calibrated on the two observed cadences (200~ms and 2~s) and on the competition measured per opportunity, and defer the construction to Appendices~\ref{app:extrap}
and~\ref{sec:routeb}. The projection points to a diminishing return, as the diffusive value falls with $B$ toward the invariant floor (Appendix~\ref{app:blockvalue}). At 200~ms the expected number of rival searchers in a single window is already only $0.14$, so most windows are uncontested and the winning bid sits at a value-capture floor. How large the drop is depends on the model, and the two differ several fold.
The model built from the measured competition (Appendix~\ref{sec:routeb}) has the winning fee fall by under $10\%$ between 200 and 50~ms, from $0.181$ to $0.169$~gwei. A simpler model
(Appendix~\ref{app:extrap}) gives a $33$ to $68\%$ drop. They differ on whether $\beta(c)$ stays the same at shorter windows: the simpler model only spreads the competition thinner, while Route B reuses the measured $\beta(c)$. Neither can be checked below 200~ms. No chain runs there yet, and the $2$~s and 200~ms bids show $\beta(c)$ changes with the window (Table~\ref{tab:betac} in Appendix~\ref{app:oppmatch}), so both are rough bounds, not predictions.{Note that reducing block time to very little number distorts $p,(1-p)$ assumption for the current round/next round probabilities. In fact, the ordering policy converges to FCFS if block time converges to 0. However, the shortest block time considered in the analysis is $50$~ms, which makes sure that the transaction either lands in the current round or the next round with certainty. Typical searcher latency is in the range $[10,40]$.}

\section{Related work}
\label{sec:related}

This work draws on four lines of research: transaction-fee mechanism design, maximal extractable value (MEV) and priority gas auctions, preconfirmations and faster block times, and the block-time dependence of arbitrage.

\textbf{Transaction-fee mechanisms.} Roughgarden~\cite{roughgarden2020eip1559} analyzes EIP-1559 and introduces incentive-compatibility notions for the blockchain setting. Leonardos et al.~\cite{leonardos2021dynamical} study the base-fee update rule as a dynamical system and bound the step size that guarantees convergence, and Reijsbergen et al.~\cite{reijsbergen2021honeymoon} measure the block-size fluctuations observed one month after the launch of EIP-1559. These works are primarily related to the base fee, which flashblocks leave unchanged. We study the priority fee and how it responds when the auction cadence changes.

\textbf{MEV and priority gas auctions.} Daian et al.~\cite{daian2020flashboys} document priority gas auctions among arbitrage bots and risks they introduce to blockchain consensus, and Qin et al.~\cite{qin2022quantifying} quantify extractable value from arbitrage, liquidations, and sandwiches. Wang et al.~\cite{wang2026blockspace} measure spam MEV on Base and Arbitrum, where searchers submit speculative transactions resolved only at execution. These works study ordering value at the level of the whole block. We measure how the same competition changes when the block is split into ten sequential sub-auctions.

\textbf{Preconfirmations and faster blocks.} Flashblocks streams 200~ms preconfirmation sub-blocks within Base's 2~s block~\cite{flashblocks2025}, a sequencer-issued form of the preconfirmations proposed for rollups~\cite{drake2024preconf}. Schwarz-Schilling et al.~\cite{schwarzschilling2023timing} model the timing games proposers play to capture MEV within a slot, and Fritsch et al.~\cite{fritsch2025mev} analyze arbitrage under an auctioned time advantage. These works price the value of time and sequencing. In market design, Budish et al.~\cite{budish2015hft} show that a continuous order book drives a latency arms race, which batching into frequent discrete auctions removes. Flashblocks instead add ten sequential races within the block. We measure the net effect on searcher bids from on-chain data.

\textbf{Block-time arbitrage.} Milionis et
al.~\cite{milionis2022lvr,milionis2023fees} formalize loss-versus-rebalancing and its extension to
trading fees, while Fritsch and Canidio~\cite{fritsch2024measuring} measure its block-time dependence.
Nezlobin and Tassy~\cite{nezlobin2025lvr} derive the dependence for deterministic and general block
times, and Bundi~\cite{bundi2026optimal} decomposes it into diffusion and invariant jump channels. We
use the diffusion-plus-jump decomposition to connect arbitrage value to the priority fees searchers
pay under flashblocks.

\section{Conclusion}
\label{sec:conclusion}
Flashblocks compress searcher bidding on Base. Restructuring the priority gas auction into ten 200~ms sub-auctions lowers the searcher effective priority fee by $0.187$~gwei ($0.168$ before panel weighting) relative to a matched control, $59\%$ of its pre-period level, and collapses the top-of-block fee concentration from $0.98$ to $0.28$. The effect is a level shift at the July intervention. The block-level regime persists eleven months after launch, though we do not re-estimate the causal effect on the later months. At the opportunity level, ordering within a block is decided partly by latency. A higher-bidding transaction loses to a one-flashblock deficit in $1.9$ to $2.7\%$ of contested opportunities weighted by value, and $5.4$ to $6.6\%$ unweighted, both upper bounds. The compression is first-order at 200~ms and diminishes below it. The thinning model of Appendix~\ref{app:extrap} and the measured-competition model of Appendix~\ref{sec:routeb} bound the drop. Two questions remain open. The revert-rate response is not identified under our screen and is reported descriptively. The shorter-window projection is a model extrapolation rather than a measurement, because no sub-200~ms regime exists on-chain to date.

\bibliographystyle{splncs04}
\bibliography{references}

\appendix

\section*{Contents of the Appendices}
{\small
\noindent\begin{tabular}{@{}r@{\quad}p{0.82\textwidth}@{\quad}r@{}}
\ref{app:proofs}      & \nameref{app:proofs}      & \pageref{app:proofs}      \\[2pt]
\ref{app:blockvalue}  & \nameref{app:blockvalue}  & \pageref{app:blockvalue}  \\[2pt]
\ref{app:fields}      & \nameref{app:fields}      & \pageref{app:fields}      \\[2pt]
\ref{sec:livepull}    & \nameref{sec:livepull}    & \pageref{sec:livepull}    \\[2pt]
\ref{sec:stability}   & \nameref{sec:stability}   & \pageref{sec:stability}   \\[2pt]
\ref{app:robustness}  & \nameref{app:robustness}  & \pageref{app:robustness}  \\[2pt]
\ref{app:oppmatch}    & \nameref{app:oppmatch}    & \pageref{app:oppmatch}    \\[2pt]
\ref{sec:hyp}         & \nameref{sec:hyp}         & \pageref{sec:hyp}         \\[2pt]
\ref{sec:vol}         & \nameref{sec:vol}         & \pageref{sec:vol}         \\[2pt]
\ref{app:extrap}      & \nameref{app:extrap}      & \pageref{app:extrap}      \\[2pt]
\ref{sec:routeb}      & \nameref{sec:routeb}      & \pageref{sec:routeb}      \\
\end{tabular}}

\bigskip

\section{Proofs}
\label{app:proofs}

\begin{proof}[of Proposition~\ref{prop:equilibrium_bidding}]
    Suppose a player gets value sample $x$, and bids $b(x)$. If this player makes in the current round, with probability $p$, all other players must either make it in the next round, or bid less than this player, this happens with probability $pF(x)$. Let $r:=1-p$. That is, this player wins with probability $p(r+pF(x))^{n-1}$.
    If the original player makes it to the next round, the only chance it can win is if all others make it in the next round. This contributes probability $r^nF(x)^{n-1}$.
    The winning probability is

    \begin{equation}\label{prob_of_winning_x}
        Q(x)=p[r+pF(x)]^{n-1}+r^nF(x)^{n-1}.
    \end{equation}
    Its derivative is equal to:

    \begin{equation}\label{deriv_prob_of_winning}
        Q'(x)=(n-1)f(x)[p^2[r+pF(x)]^{n-2} + r^nF(x)^{n-2}].
    \end{equation}
    
    A bidder with true value $v$, bidding as if it had value $x$ has the expected utility equal to:

    \begin{equation*}
        U(v,x)=vQ(x)-(c+(1-c)Q(x))b(x).
    \end{equation*}

    The first order condition at $x=v$ is equal to:

    \begin{equation}\label{differential_equation}
        \frac{d}{dv}(c+(1-c)Q(v))b(v) = vQ'(v).
    \end{equation}
    Let $q_0:=Q(\underline{v})=pr^{n-1}$. The boundary condition is: 

    \begin{equation*}
    [c+(1-c)q_0]b(\underline{v}) = 0.
    \end{equation*}

    Integrating~\eqref{differential_equation} gives:

    \begin{equation*}
        b(v) = \frac{\int_{\underline{v}}^v tQ'(t)dt}{c+(1-c)Q(v)}.
    \end{equation*}

    Plugging in the formula for $Q'(t)$ gives the bidding function from the proposition.\qed
\end{proof}

\begin{proof}[of Proposition~\ref{prop:revenue_monotonicity}]
    Let $q(v;p):=p(1-p+pF(v))^{n-1}+(1-p)^n F(v)^{n-1}$ be the equilibrium allocation probability for a bidder with value $v$. Define $q_v(v;p):=\frac{\partial q(v;p)}{\partial v}$. Then, $q_v(v;p)=(n-1)f(v)\left[p^2(1-p+pF(v))^{n-2} + (1-p)^nF(v)^{n-2}\right]$.

    Symmetric equilibrium bidding function is: 

    \begin{equation}\label{bid_as_function_of_p}
        b(v;p) = \frac{\int_{\underline{v}}^v t q_v(t;p)dt}{c+(1-c)q(v;p)}.
    \end{equation}

    The interim expected payment of a bidder of value $v$ is: 

    \begin{equation}
        m(v;p) = \int_{\underline{v}}^{v}tq_v(t,p)dt.
    \end{equation}
    Expected total auction revenue is: 

    \begin{equation}
        R(p) = n\int_{\underline{v}}^{\bar v}v(1-F(v))q_v(v,p)dv.
    \end{equation}

    We can show that if $p^*=\frac{n}{n+2}$, for any $p>p^*$, $R'(p)>0$, hence the revenue is increasing in $p$ for sufficiently large $p$. 
    Let $x=F(v)$, $V(x)=F^{-1}(x)$. Then 
    
    $$R(p)=n(n-1)\int_{0}^{1}V(x)(1-x)[p^2(1-p+px)^{n-2}+(1-p)^nx^{n-2}]dx.$$
    Define $A(p,x):=p^2(1-p+px)^{n-2}+(1-p)^nx^{n-2}$. Its derivative with respect to $p$ is equal to $A_p(p,x)=p(1-p+px)^{n-3}(2-np(1-x))-n(1-p)^{n-1}x^{n-2}$. Then, $R'(p)=n(n-1)\int_{0}^{1}V(x)(1-x)A_p(p,x)dx.$

    Set $h_p(x):=(1-x)A_p(p,x)$. Since $V(x)$ is increasing, it is enough to show that the following integral is non-negative: 

    $$H_p(z)=\int_{z}^{1}h_p(x)dx,$$
    for any $z\in [0,1]$. 
    Let $Q:=1-z$ and $q:=1-x$, then

    \begin{equation}
        H_p(z)=pQ^2(1-pQ)^{n-2}-n(1-p)^{n-1}\int_{0}^{Q}q(1-q)^{n-2}dq.
    \end{equation}
    The following two inequalities hold: 
    $$\int_{0}^{Q}q(1-q)^{n-2}q\leq \frac{Q^2}{2},$$
    and $1-pQ\geq 1-p$.
    Therefore, 

    $$H_p(z)\geq pQ^2(1-p)^{n-2}-\frac{n}{2}(1-p)^{n-1}Q^2=Q^2 (1-p)^{n-2}[p-\frac{n}{2}(1-p)].$$

    Hence, $H_p(z)\geq$ is equivalent to $p\geq \frac{n}{2}(1-p)$, which is equivalent to $p\geq \frac{n}{n+2}$.

    Integration by parts gives: 

    $$\int_0^1 V(x)h_p(x)dx = V(0)H_p(0)+\int_0^1H_p(x)dV(x).$$
    Since $V$ is increasing and nonnegative, $dV\geq 0$, which gives that $H_p(x)\geq 0$ for every $x$ implies $R'(p)\geq 0$. This finishes the proof.
    \qed
\end{proof}

\begin{example}\label{example_nonmonotonicity_bid}
    Consider $n=3$, $F(v)=v$ for $v\in [0,1]$. Then, $q(v;p)=p(1-p+pv)^2+(1-p)^3v^2$. The equilibrium bid is $$b(v;p)=\frac{M(v;p)}{c+(1-c)q(v;p)}.$$
    
By direct integration we have: 
\begin{equation*}
    M(v;p)=\frac{v^3}{3}[3p^2(1-p)+2v(1-3p+3p^2)].
\end{equation*}
At $p=1$, $b(v;1)=\frac{2/3v^3}{c+(1-c)v^2}$. Differentiating the bid with respect to $p$ and evaluating it in point $p=1$ gives:
\begin{equation*}
    \frac{\partial b(v;p)}{\partial p}|_{p=1}=\frac{v^2[(1-c)v^2+6cv-3c]}{3[c+(1-c)v^2]^2}.
\end{equation*}
Its sign is exactly the sign of $(1-c)v^2+6cv-3c$. For every $c>0$, this function in $v$ is negative for sufficiently small $v$, as $\lim_{v\rightarrow 0+}[(1-c)v^2+6cv-3c]=-3c<0$. By continuity, for sufficiently large $p$, we do not have monotonicity in bids. 
\end{example}

\section{Block-time dependence of arbitrage value}
\label{app:blockvalue}
This appendix derives how the arbitrage value on a pool scales with the block time $B$, the diffusive term of \eqref{eq:sqrtt}. The value per opportunity grows linearly in $B$, whereas the price grows as $\sqrt{B}$. We follow the loss-versus-rebalancing analysis of Milionis, Moallemi, and Roughgarden~\cite{milionis2022lvr} and its block-time extensions~\cite{nezlobin2025lvr,bundi2026optimal}.

\paragraph{Curvature cost in continuous time.}
Let $P$ be the external price and $V(P)$ the value of the pool once it is arbitraged to $P$. Based on Envelope theorem, $V'(P)$ equals the pool's risky holding. Now, for a price move $\mathrm{d}P$ the pool value changes by
\begin{equation}
  \mathrm{d}V = V'(P)\,\mathrm{d}P + \tfrac{1}{2}V''(P)\,(\mathrm{d}P)^2,
  \label{eq:ito}
\end{equation}
where the first term is the benchmark wealth change, the pool assets traded at the external price without friction. The loss against that benchmark is the negative of the second term, since $V'' \le 0$.
Under a geometric Brownian price $\mathrm{d}P = \sigma P\,\mathrm{d}W$ the quadratic variation gives $(\mathrm{d}P)^2 = \sigma^2 P^2\,\mathrm{d}t$, so the loss accrues at the rate
\begin{equation}
  \frac{\mathrm{d}\,\mathrm{LVR}}{\mathrm{d}t} = \tfrac{1}{2}\Gamma(P)\,\sigma^2 P^2,
  \label{eq:lvrrate}
\end{equation}
where $\Gamma = -V'' \ge 0$ is the pool curvature. Note that this rate is proportional to $\sigma^2$ and doesn't depend on the block time.

\paragraph{Loss per arbitrage event.}
With discrete blocks the pool is corrected once per block. Moreover, the pool fee opens a no-arbitrage band of full width $\gamma$, and an arbitrage event occurs when the price leaves the band by an overshoot $d$.
Consider a pool with depth $\ell$. The arbitrageur's marginal profit at the mispricing $u$ is $l \cdot u$.
The total profit (or, the loss for the LP) during this event is

\begin{equation}
  L_{\mathrm{event}} = \int_0^{d}\ell\,u\,\mathrm{d}u = \tfrac{1}{2}\ell\,d^2,
  \label{eq:levent}
\end{equation}
The loss is again quadratic in the price discrepancy $d$.

\paragraph{Substituting the block-time move.}
Over a block of length $B$ the log-price move has standard deviation
\begin{equation}
  \sigma_b = \sigma\sqrt{B},
  \label{eq:sigmab}
\end{equation}
where $\sigma$ is the volatility per unit time. The price discrepancy scales with $\sigma_b$, so the event loss of \eqref{eq:levent} is of order $\ell\sigma_b^2 = \ell\sigma^2 B$. Aggregating over the arbitrage events, the arbitrage value captured on a pool per block is
\begin{equation}
  \mathrm{ARB}(B) = \frac{\ell\,\sigma_b^2}{2 + \sqrt{2}\,\gamma/\sigma_b},
  \label{eq:arb}
\end{equation}
where the denominator encodes how often the price clears the fee band~\cite{milionis2022lvr,nezlobin2025lvr}.

When the fee is small next to the per-block move ($\gamma \ll \sigma_b$), the denominator is constant and
\begin{equation}
  \mathrm{ARB}(B) \approx \tfrac{1}{2}\ell\,\sigma^2 B,
  \label{eq:arblin}
\end{equation}
linear in the block time, so $\kappa$ in \eqref{eq:sqrtt} is $\ell/2$. When the fee dominates ($\gamma \gg \sigma_b$), the value grows faster, as $\sigma_b^3 \propto B^{3/2}$. The extra half-power is from the frequency of arbitrage events. When the no arbitrage band is wide, the chance per block falls as $\sigma_b$. Therefore, the loss $\sigma_b^3$. In the case of a narrow band, the price exits almost always, which leaves the loss linear in $B$. The common arbitraged pools sit near the linear case, because the price move exceeds the fee on the volatile pools searchers chase. We do not check this pool by pool. The winning-fee exponent measured near one at high competition (Section~\ref{sec:theory}) is consistent with a linear value.

\paragraph{Value against rate.}
Normalizing \eqref{eq:arb} by the block time $B$ gives an arbitrage rate that is constant when the fee is negligible and falls as $\sqrt{B}$ in the fast-block limit. The value the winning searcher contests is the per-opportunity quantity \eqref{eq:arblin}, which scales as $B$.

\paragraph{The jump floor.}
This diffusive term vanishes as $B \to 0$. A discontinuous price move creates an arbitrage regardless of how short the window is, which contributes to the floor $J$ in \eqref{eq:sqrtt}. Bundi~\cite{bundi2026optimal} models the reference price as a jump-diffusion and separates this invariant jump channel from the diffusion channel above. We take $J$ from that decomposition.

\section{Extracted fields}
\label{app:fields}

\textbf{Data construction.} We extract the transaction data with cryo in chunks of 1,000 blocks and
map them to daily files sorted by $(\texttt{block\_number}, \texttt{tx\_index})$. The transaction
pull contains the seventeen fields listed in Table~\ref{tab:fields}, and a companion block-level pull
supplies the base fee. Integer-valued fee and value fields are in wei.

\begin{table}[h]
  \centering
  \caption{Transaction fields extracted per record, in stored order.}
  \label{tab:fields}
  \begin{tabular}{ll}
    \toprule
    Field & Description \\
    \midrule
    \texttt{block\_number}            & 2~s block height \\
    \texttt{tx\_index}                & position within the block (execution order) \\
    \texttt{tx\_hash}                 & transaction hash \\
    \texttt{block\_time}              & block timestamp (UTC) \\
    \texttt{tx\_type}                 & 0 legacy, 2 EIP-1559, 126 system/deposit \\
    \texttt{from}                     & sender address \\
    \texttt{to}                       & recipient address \\
    \texttt{value}                    & native value transferred (wei) \\
    \texttt{gas\_limit}               & gas limit \\
    \texttt{gas\_used}                & gas consumed \\
    \texttt{gas\_price}               & effective gas price paid (wei) \\
    \texttt{max\_fee\_per\_gas}       & EIP-1559 fee cap (wei) \\
    \texttt{max\_priority\_fee\_per\_gas} & declared priority-fee bid $b$ (wei) \\
    \texttt{priority\_fee\_per\_gas}  & effective priority fee $p = \texttt{gas\_price} - \texttt{base\_fee}$ (wei) \\
    \texttt{nonce}                    & sender nonce \\
    \texttt{method\_id}               & first four calldata bytes (function selector) \\
    \texttt{success}                  & execution status (true/false) \\
    \bottomrule
  \end{tabular}
\end{table}

\section{Live priority-queue pull}
\label{sec:livepull}

The slice count reflects how the builder sequences a window. It sorts the available transactions,
begins executing, and serves a newly arrived higher-fee transaction ahead of the next sorted item,
pulling live from a priority queue. This predicts the fee sequence. A live preemption at position
$i$ produces a cross-sender jump ($p_i > p_{i-1}$) whose successor resumes below the spike
($p_{i+1} < p_{i-1}$), a \emph{return}; a static per-window re-sort instead stays high
($p_{i+1} \ge p_{i-1}$), \emph{elevated}. Table~\ref{tab:livepull} classifies the post-flashblock
jumps: returns occur $6.1$ times per block and are absent pre-flashblocks, direct evidence of
mid-window preemption.
Figure~\ref{fig:trace} contrasts a representative block from each regime. This evidence is
indirect: sequencer receive-times are unavailable, so arrival dynamics are read from the fee
sequence.

\begin{figure}[t]
  \centering
  \includegraphics[width=0.80\textwidth]{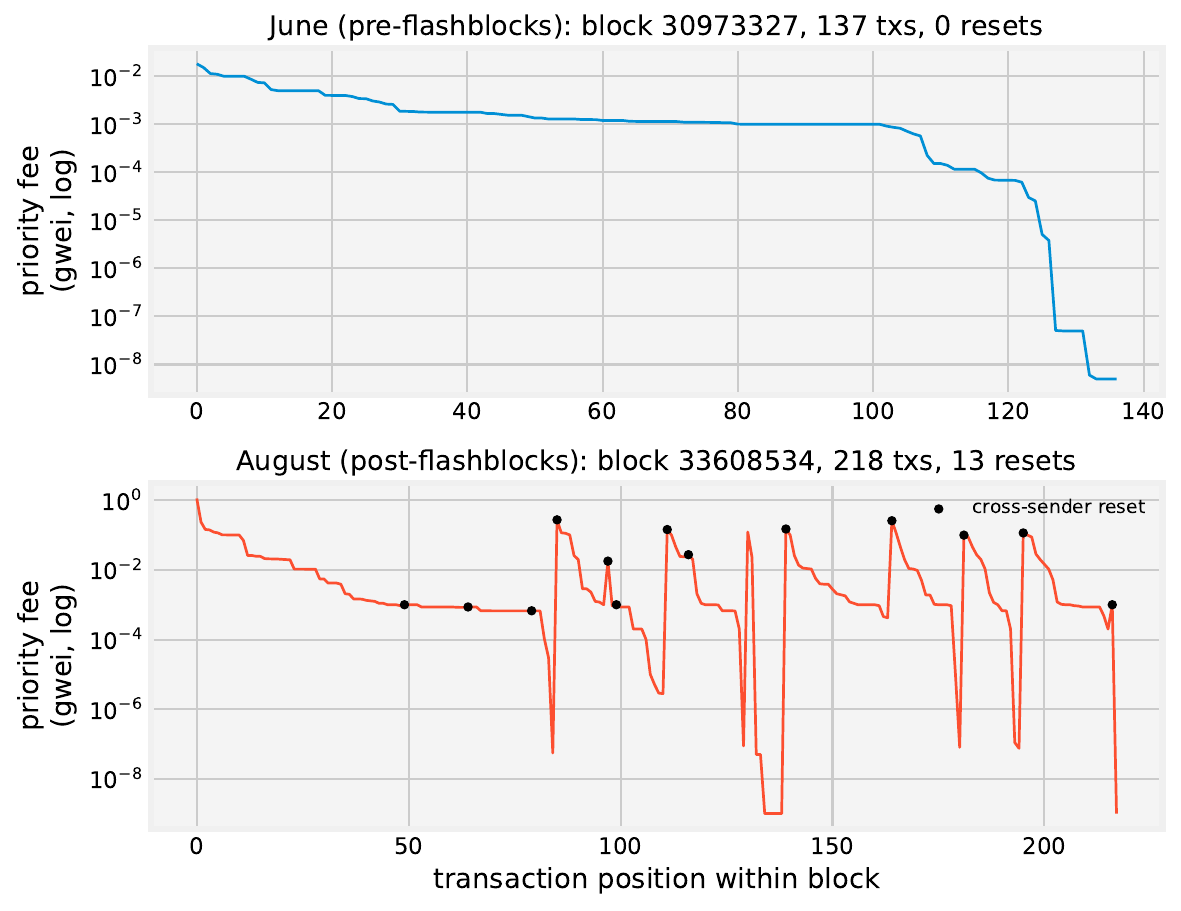}
  \caption{Effective priority fee by transaction position for a representative block per regime
  (log scale). Pre-flashblocks is monotone descending; post-flashblocks shows high-fee spikes that
  return beneath the interrupted run. Black markers are cross-sender resets.}
  \label{fig:trace}
\end{figure}

\begin{table}[t]
  \centering
  \caption{Live-pull test on post-flashblock cross-sender jumps; the pre-flashblock month has none.}
  \label{tab:livepull}
  \input{livepull_table}
\end{table}

\section{Regime stability of flashblocks}
\label{sec:stability}
In this section, we contrast the outcomes by comparing them at different times after the launch of flashblocks.

\subsection{Comparison against a mature post-flashblock month.}
\label{sec:stability_june}
August 2025 is only one month after launch, so the measured contrast might capture a builder or a searcher population that had not yet settled rather than the steady-state flashblock regime. We now extend the analysis to two further post-launch windows, October 2025 (three months after launch) and June 2026 (roughly eleven months after launch). The data is drawn and processed identically to the 2025 windows (Section~\ref{sec:data}). If the August measurement is a transition artifact, the later regimes should differ. However, if the flashblock effect is structural, the mature months should reproduce it.

Table~\ref{tab:stability} sets the three post-flashblock months against the June 2025 pre-period. The
qualitative regime is unchanged eleven months on. The single 2~s auction is absent in all post
months. The auction remains fragmented into roughly ten cross-sender sub-block contests, with the cross-sender share of upward fee jumps at 94.4\% in August, 92.8\% in October, and 93.6\% in June 2026. The fee concentration remains far below its pre-flashblock level, with the top gas-decile share $C$ at 0.27, 0.27, and 0.36 respectively, against 0.89 in June 2025.

We now discuss the two outcomes that move across the post months. This likely indicates that the builder's sequencing matured rather than an effect that decayed. First, the number of segments settles onto ten. In August 98.3\% of blocks carry eleven or more inferred segments, with an uncapped mean near 19.5; October 2025 is the intermediate point, with 66.7\% of blocks above ten and a mean of 12.0; and by June 2026 91.8\% of blocks carry exactly ten segments and only 3.3\% exceed ten, so the mean falls monotonically, 19.5 to 12.0 to 10.0. Second, and relatedly, the live priority-queue pull of Appendix~\ref{sec:livepull} disappears: the share of cross-sender jumps classified as returns falls monotonically from 33.2\% (6.1 per block) in August, through 19.8\% (2.2 per block) in October, to 0.4\% (near zero per block) in June 2026.
 The live-pull dynamic is predominantly a property of the early post-launch regime specifically, not a permanent feature of the auction.

The searcher-level fee outcomes are not comparable in levels across the post periods for the same confounding reasons discussed in Section~\ref{sec:ident}. Over eleven months the ETH price, congestion, the base fee, and the intensity of automated activity all move, and any combination of them shifts realized fees independently of the auction cadence. Descriptively, the strict-core searcher effective priority fee falls sharply from 0.45 gwei in the pre-period to 0.11 gwei in August and then holds essentially flat across the three post months (0.112, 0.110, and 0.123 gwei in August, October, and June 2026; Panel~B of Table~\ref{tab:stability}), and the searcher revert rate stays high and comparable across them (0.56, 0.49, 0.48). That the post-regime fee level is stable in gwei even as the ETH price roughly halved over the year is notable, but cross-period levels remain confounded; the within-period relative structure that the difference-in-differences design exploits is what identifies the causal fall from the pre-period.

\begin{table}[t]
  \centering
  \caption{Regime stability across the three post-flashblock months, against the June 2025
  pre-period, on the complete months. \emph{Panel A} is block-level microstructure across all senders
  (reconstruction of Section~\ref{sec:recon}; $C$ is the top gas-decile share of within-block
  priority-fee value; histogram segment counts are capped at eleven, the mean row is uncapped).
  \emph{Panel B} is the strict searcher population (screened in-period, per month, since the
  balanced-panel core of Section~\ref{sec:groups} cannot be defined for the later months) and ETH
  realized volatility; ``corr'' is the within-month daily Spearman correlation of the row outcome
  with daily realized volatility.}
  \label{tab:stability}
  \small
  \setlength{\tabcolsep}{4pt}
  \begin{tabular}{lrrrr}
    \toprule
     & \multicolumn{1}{c}{\shortstack{June 2025\\(pre)}}
     & \multicolumn{1}{c}{\shortstack{Aug 2025\\$(+1)$}}
     & \multicolumn{1}{c}{\shortstack{Oct 2025\\$(+3)$}}
     & \multicolumn{1}{c}{\shortstack{June 2026\\$(+11)$}} \\
    \midrule
    \multicolumn{5}{l}{\emph{Panel A. Block-level microstructure (all senders)}} \\
    \addlinespace
    Blocks                          & 1.29M   & 1.33M   & 1.34M   & 1.29M \\
    Mean segments/block             & 1.00    & 19.5    & 12.0    & 10.0 \\
    Single-segment share            & 100.0\% & 0.1\%   & 0.0\%   & 0.0\% \\
    Blocks with exactly 10 segments & ---     & 0.6\%   & 19.3\%  & 91.8\% \\
    Blocks with $\ge 11$ segments   & 0.0\%   & 98.3\%  & 66.7\%  & 3.3\% \\
    Cross-sender jump share         & 0.0\%   & 94.4\%  & 92.8\%  & 93.6\% \\
    Fee concentration $C$           & 0.89    & 0.27    & 0.27    & 0.36 \\
    Live-pull returns (share)       & ---     & 33.2\%  & 19.8\%  & 0.4\% \\
    Live-pull returns per block     & 0       & 6.1     & 2.2     & $\approx 0$ \\
    \addlinespace
    \multicolumn{5}{l}{\emph{Panel B. Searcher-level outcomes and ETH volatility}} \\
    \addlinespace
    Realized volatility (annualized) & 66\%    & 70\%    & 76\%    & 68\% \\
    Searcher effective fee (gwei)    & 0.451   & 0.112   & 0.110   & 0.123 \\
    Searcher declared bid (gwei)     & 0.478   & 0.107   & 0.123   & 0.123 \\
    Searcher revert rate             & 0.615   & 0.561   & 0.491   & 0.483 \\
    corr(vol, eff.\ fee), daily      & 0.81    & 0.29    & 0.28    & 0.76 \\
    corr(vol, revert rate), daily    & $-0.08$ & $-0.38$ & $-0.40$ & $-0.05$ \\
    \bottomrule
  \end{tabular}
\end{table}

Realized volatility differs across the windows, from 66\% to 76\% annualized (Panel~B of
Table~\ref{tab:stability}). The outcome nets out the base fee. What matters is the priority-fee response to volatility. The day
effects absorb the part common to both groups. June and August had comparable volatility. A volatility
gap is therefore not driving the estimate (Appendix~\ref{sec:vol}).

\clearpage
\section{Estimation details and robustness}
\label{app:robustness}

Table~\ref{tab:did-robust} compares estimates for the two control definitions in
Section~\ref{sec:groups}. The fee and concentration estimates retain their signs
under the activity-matched and retail controls.
The fee declines are larger against the activity-matched control.
The revert-rate estimates remain descriptive because their pre-trends are nonzero
(Section~\ref{sec:did}).

\begin{table}[htbp]
  \centering
  \caption{Robustness of the DiD to the control definition. Fee outcomes in gwei.}
  \label{tab:did-robust}
  \input{did_robust}
\end{table}

\textbf{Fixed-effects weighting.} The DiD column is the fixed-effects coefficient $\hat\tau$ of
\eqref{eq:twfe}, not the arithmetic difference of the two $\Delta$ columns. The two coincide only on a balanced, equally weighted panel; here the day effects and the unbalanced per-address day coverage weight each address-day cell by its within-group residual variation, so $\hat\tau$ departs from
(searcher~$\Delta$) $-$ (control~$\Delta$) in either direction: it raises the fee and bid estimates
above the raw difference ($-0.187$ vs $-0.168$; $-0.228$ vs $-0.183$) and lowers the transaction-count
estimate below it ($+55.1$ vs $+155.7$).

\textbf{Population definitions.} The searcher fee levels in this paper come from three
populations: the competitive core addresses active in both months, used for the causal estimates
($0.317$ gwei in Section~\ref{sec:did}); the in-period monthly screen (Table~\ref{tab:stability});
and the both-months aggregate (Table~\ref{tab:groups}). They are not directly comparable.
The fee-concentration index $C$ likewise appears for three populations: the balanced-panel core
($0.98$ pre-period, Section~\ref{sec:did}), the in-period searchers ($0.93$ pre-period,
Appendix~\ref{sec:hyp}), and all senders ($0.89$ pre-period, Table~\ref{tab:stability}).

\section{Opportunity-level analysis}
\label{app:oppmatch}

Grouping competitors is not trivial, because searchers execute through their own contracts: two
searchers racing the same opportunity call different addresses with different method selectors, so a
destination-and-selector key splits genuine competitors apart, and a shared router pools unrelated
transactions together. The pattern is regime-dependent: over June 2025, 90\% of core searcher fee
value flows through routers shared by two or more searchers (one contract alone accounts for 49\%),
whereas over August 2025 the majority, 56\%, flows through single-searcher contracts.

We therefore identify an opportunity by the underlying liquidity pool. Decoded \texttt{Swap} events
are pulled for the major Base decentralized exchanges (the Uniswap v2 and v3 families, the Solidly
and concentrated-liquidity Aerodrome pools, and the Uniswap v4 singleton); the emitter of a
\texttt{Swap} is the pool, so two transactions touching the same pool contest the same state
regardless of which private contract issued them. An opportunity is a single pool dislocation
delimited by its successful capture; a pool hosts several within a 2~s block. Because reverted
transactions emit no events, the pool of a losing (reverting) transaction is attributed to the pool
its contract reveals on successful transactions, which covers 96\% of core reverted fee
value in August.

\textbf{Partial payments.} Reverting losers are matched through the pool attribution above. Winners
pay on full execution; losers pay only for gas consumed before reverting.

\textbf{Secondary latency diagnostics.} Among the contested captures in August 2025
(Section~\ref{sec:oppmatch}), a higher-fee attempt reverts within two reconstructed
slices after its linked winner in $9.8$--$11.9\%$ of cases. The range spans the dominant-pool and proportional
rules for attributing a losing transaction's pool. The outbidding losers had bid
a median of $0.02$--$0.03$~gwei above the winner.

June 2025 has a single fee-sorted slice per 2~s block, so no higher-fee loser executes
after a lower-fee winner. The late-arrival rate is therefore $0.0\%$ by construction.
This is a mechanical consistency check. It does not independently validate the matching.

\textbf{Competition and winning fees.} Tables~\ref{tab:compete} and~\ref{tab:betac}
report the full competitor-count distribution and winning-fee means used in
Section~\ref{sec:theory}.

\begin{table}[htbp]
  \centering
  \caption{Competition per opportunity (pool within a 2~s block), both block-time regimes. $M$ is the
  number of distinct core searchers contesting an opportunity.}
  \label{tab:compete}
  \begin{tabular}{lrrrrr}
    \toprule
    Regime & Opportunities & Mean $M$ & Var/Mean & Uncontested & Captures/pool/block \\
    \midrule
    June (2~s)     & 10{,}886{,}949 & 1.41 & 3.75 & 81.4\% & 1.07 \\
    August (200~ms) &  9{,}591{,}245 & 1.56 & 2.50 & 78.0\% & 1.22 \\
    \bottomrule
  \end{tabular}
\end{table}

\begin{table}[htbp]
  \centering
  \caption{Mean winning fee $\beta(c)$ (gwei) and the number of opportunities, by competitor count
  $c$, in both regimes. The floor $\beta(1)$ is nearly the same across regimes; the premium $\beta(c) - \beta(1)$
  is much larger in the 2~s regime.}
  \label{tab:betac}
  \begin{tabular}{rrrrr}
    \toprule
     & \multicolumn{2}{c}{June (2~s)} & \multicolumn{2}{c}{August (200~ms)} \\
    \cmidrule(lr){2-3}\cmidrule(lr){4-5}
    $c$ & $\beta(c)$ & Opportunities & $\beta(c)$ & Opportunities \\
    \midrule
    $1$      & 0.177  & 6{,}313{,}848 & 0.164 & 5{,}376{,}386 \\
    $2$      & 0.824  &   555{,}731 & 0.366 &   643{,}861 \\
    $3$      & 1.916  &   183{,}241 & 0.455 &   243{,}283 \\
    $4$      & 3.217  &    81{,}769 & 0.495 &   127{,}281 \\
    $5$      & 4.586  &    43{,}066 & 0.638 &    74{,}055 \\
    $\ge 6$  & 11.446 &    79{,}967 & 1.027 &   132{,}027 \\
    \bottomrule
  \end{tabular}
\end{table}

\section{Transaction-level hypotheses}
\label{sec:hyp}

In addition to the address-level causal analysis, we test five transaction-level hypotheses.

\begin{description}
  \item[H1 (bidding), supported.] Preemption carries a premium. Splitting post-flashblock
        transactions by the reconstructed reset flag, the transactions that reset the queue pay
        $9.9\times$ the effective fee of interior transactions ($0.139$ versus $0.014$ gwei) and
        overbid at $2.0\times$ the interior rate. The fee reduction of Section~\ref{sec:did} thus
        coexists with a concentrated premium paid by the transactions that jump the queue.
  \item[H2 (bidding), refuted.] The within-block fee distribution does not become bimodal: the
        mean bimodality coefficient is $0.89$ in both regimes and already saturated by right skew,
        so no shape change is detectable at the block level even though a separated preemptor
        population exists.
  \item[H3 (contention), supported.] A blockspace Jevons effect holds. Reverts per block rise from
        $17.6$ to $25.1$ ($+43\%$), the revert rate from $0.078$ to $0.114$ ($+46\%$), and burned
        revert-gas per block from $2.5$M to $3.4$M ($+36\%$), while reverts per inferred contest
        fall from $17.6$ to $1.3$ ($-93\%$). Cheaper, faster retries multiply attempts per block
        even as each sub-contest is far less contested.
  \item[H4 (timing), refuted.] Inter-reset spacing does not synchronize to the 200~ms cadence. The
        coefficient of variation of inter-reset gaps is $1.17$, near the memoryless value of one,
        and reset positions are near-uniform with no boundary comb. This is measured on a
        gas-position proxy for arrival time, as sub-block timestamps are unavailable.
  \item[H5 (fee concentration), refuted as composition.] The concentration collapse
        is not a compositional artifact: within searchers alone $C$ falls from $0.93$ to $0.26$,
        matching the pooled fall from $0.89$ to $0.27$. This matches the causal
        fee-concentration effect estimated in Section~\ref{sec:did} ($0.98$ to $0.28$ there, measured a
        different way).
\end{description}

Figure~\ref{fig:did-revert} shows the revert-rate event study behind the non-identification noted in
Section~\ref{sec:did}.

\begin{figure}[t]
  \centering
  \includegraphics[width=0.86\textwidth]{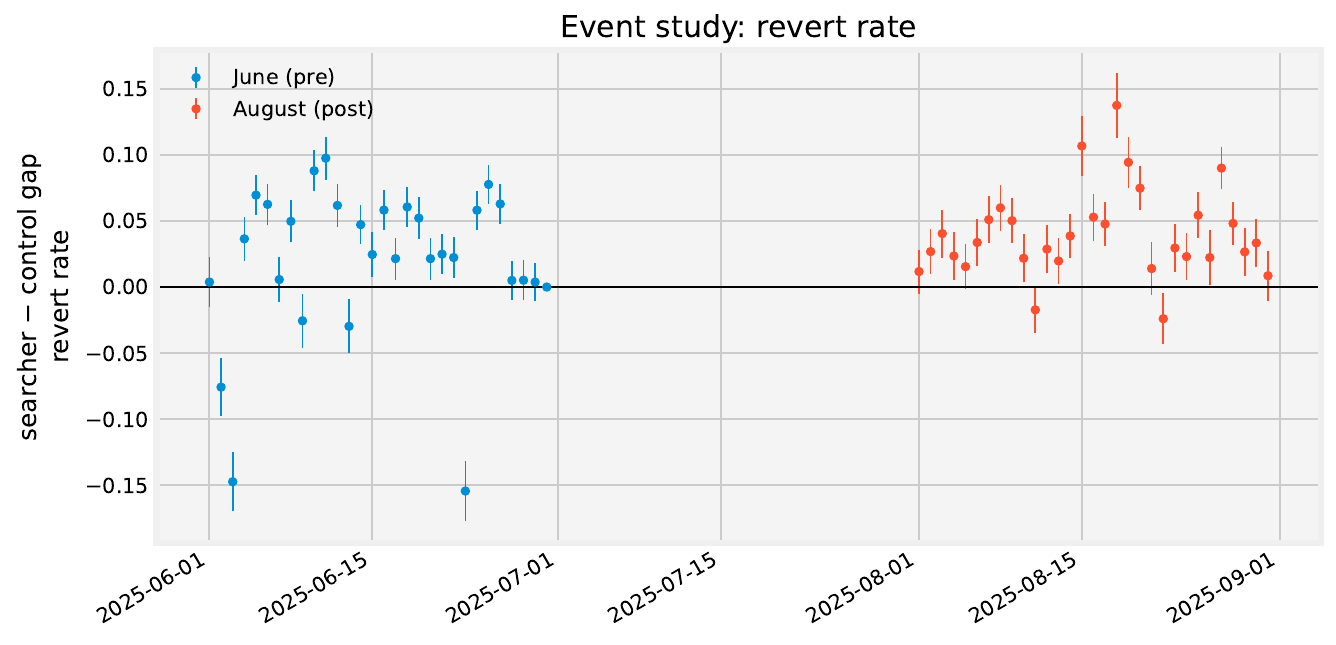}
  \caption{Event study for the revert rate. The pre-period gap is not stable, the visual
  counterpart of the failed pre-trend test that withholds a causal reading.}
  \label{fig:did-revert}
\end{figure}

\section{Market volatility as a confounder}
\label{sec:vol}

The most volatility-sensitive outcome is the fee level, which bears directly on the parallel-trends
assumption of Section~\ref{sec:ident}, so the four windows are checked against ETH volatility.
Minute-level realized volatility, computed from Binance one-minute klines, differs materially across
the windows (Panel~B of Table~\ref{tab:stability}): annualized, it is 66\% in June 2025, 70\% in
August 2025, 76\% in October 2025, and 68\% in June 2026. October is the most volatile month, driven
by a single extreme day, the liquidation cascade of 2025-10-10, at 217\% annualized against the 76\%
monthly mean.

At daily resolution the searcher effective fee tracks realized volatility within every
month, whereas contention tracks it only weakly. The effective-fee correlation ranges from 0.28 to
0.81, higher in the two June windows than in August and October, while the revert-rate correlation is small and, for this strict core, mildly
negative (pooled $-0.18$): the core group's revert rate is already saturated and does not rise with
volatility. On the October cascade the searcher effective fee stood 4.7 standard deviations above its
monthly mean while the revert rate and transaction count barely moved, so market stress raises the
auction's clearing price rather than the volume of failed racing.

The outcome nets out the base fee, so a base-fee move cannot bias it. The concern is the priority-fee
response to volatility. The day effects absorb the part common to both groups. Parallel trends fails
only if volatility moves the two groups' priority fees by different amounts. The near-zero control fee
makes that hard to check. Two facts limit the concern. June and August had comparable volatility, $66\%$
against $70\%$ annualized. A volatility gap between them is therefore not driving the estimate. The event
study steps down at the July launch. It does not track volatility day to day (Figure~\ref{fig:did-eff}).

\section{Extrapolating bids below the 200~ms window}
\label{sec:structural}
\label{app:extrap}
This section projects searcher bidding to flashblock windows shorter than the observed 200~ms. The
difference-in-differences of Section~\ref{sec:did} identifies the searcher bid at the two cadences
realized on-chain, the single 2~s auction and the ten 200~ms flashblocks. It does not identify the bid
below 200~ms. A natural counterfactual is a still-shorter preconfirmation interval, so we ask
what the structure of the sub-auction implies at 125, 100, and 50~ms. The realized within-August
segment count cannot serve as the regressor. It varies with contention rather than window length, and
the induced bid-versus-count relation is non-monotone and wrong-signed. We therefore calibrate a
micro-auction model on the two observed points.

Write the 2~s block as a window of length $T = 2000$~ms split into $n$ equal sub-auctions of length
$B = T/n$, each including the highest-bid transaction,
\begin{equation}
  n \;=\; T / B,
  \label{eq:subauc}
\end{equation}
where $B$ is the sub-auction length and $T = 2000$~ms the block window. Then $n = 1$ at the 2~s
auction, $n = 10$ at 200~ms, and the targets $B = 125, 100, 50$~ms correspond to $n = 16, 20, 40$.
Competing searchers for a given opportunity arrive over the block as a Poisson process with expected
count $\Lambda$ per 2~s window. Thinned across the $n$ sub-windows, the expected number contesting any
one slice is
\begin{equation}
  c(n) \;=\; \Lambda / n ,
  \label{eq:thinning}
\end{equation}
where $\Lambda$ is the expected number of competitors per opportunity per 2~s window. A searcher's
winning bid carries a premium over the competitive floor that increases in the number of rivals it
must outbid within its slice. The premium falls as $n$ grows and vanishes in the continuous-auction
limit $n \to \infty$.

Let $\pi(n)$ be the searcher declared bid net of the activity-matched control at cadence $n$, the
control differencing out shocks common to both groups as in \eqref{eq:twfe}. The two on-chain cadences
give the searcher-minus-control levels underlying the Section~\ref{sec:did} estimates,
\begin{equation}
  \pi(1) \;=\; 0.374~\text{gwei}, \qquad \pi(10) \;=\; 0.192~\text{gwei},
  \label{eq:anchor}
\end{equation}
where $\pi(1)$ and $\pi(10)$ are the control-adjusted declared bids at the 2~s and 200~ms cadences,
with effective-fee analogs $0.316$ and $0.148$~gwei.\footnote{These are group-level means. Their
double difference $\pi(1) - \pi(10) = 0.182$~gwei is close to but not identical to the regression
estimate $\hat\tau = -0.228$ of Table~\ref{tab:did}. The gap is not the group-size imbalance
($3{,}032$ versus $8{,}053$), which cancels in either group mean. It is the two-way fixed-effects
weighting, which averages each address-day cell by its within-group residual variation on an
unbalanced panel, with addresses active on differing numbers of days and the day effects reweighting
the pre-to-post contrast. The model is calibrated on the level means.} We assume the premium vanishes
in the continuous limit, $\pi(\infty) = 0$. We fit these observations with three shapes,
\begin{equation}
  \pi(n) \;\in\; \big\{\, A\,n^{-\gamma}, \;\; A\big(1 - e^{-\Lambda/n}\big), \;\; A\,\log\!\big(1 + \Lambda/n\big) \,\big\},
  \label{eq:shapes}
\end{equation}
where $A$ is a scale, $\gamma$ a decay exponent, and the three forms are the power (rank-rent),
Poisson-hazard, and logarithmic shapes. The power shape carries no $\Lambda$ and is fixed exactly by
the two points in \eqref{eq:anchor}. The hazard and logarithmic shapes are fit for $(A, \Lambda)$ on
the same two points, and their fitted $\Lambda$ is checked against the competition intensity measured
directly. Matching core searcher transactions to the pools of Section~\ref{sec:oppmatch}, the mean
number of distinct core searchers contesting a pool within a 2~s block is $\Lambda = 1.56$, and $3.56$
among opportunities contested by two or more.

Table~\ref{tab:structural} reports the projected premium. At the 200~ms, all three shapes agree by
construction. Below it they diverge. The declared-bid premium at 50~ms ranges from $0.128$~gwei under
the power shape to $0.061$~gwei under the hazard shape, a $33$ to $68\%$ reduction relative to the
200~ms level, and the effective fee tracks it.

Reading the projected changes requires two checks. First, the Poisson-thinning shapes require $\Lambda$ between
$6$ and $92$ to fit the two points, far above the measured $1.56$. At the measured intensity the
per-slice competition at $n = 10$ is only $\Lambda/10 \approx 0.16$. A pure thinning model then
predicts a premium near zero at 200~ms, whereas the observed premium is still half its 2~s value. The
surviving 200~ms premium is therefore a floor set by a fixed inclusion cost rather than by any
competition. Only the power shape, which carries no $\Lambda$, is consistent with the measured
intensity. Second, we split the core searchers into flashblock-sensitive and insensitive subgroups by the
proportional June-to-August bid change. The sensitive-minus-insensitive premium falls from $0.49$ to
$0.02$~gwei between $n = 1$ and $n = 10$. This is steeper than any thinning shape permits, so the
responsive searchers already sit at their competitive floor at 200~ms.

The projection rests on three assumptions. We assume the competition intensity $\Lambda$ and the value
distribution are invariant to the cadence, and that the sub-auctions are one-shot and independent
. None of these is testable
within two calibration points, so the sub-200~ms figures are model extrapolation
rather than identified estimates. Both diagnostics point the same way. The first-order bid compression
from flashblocks is floored at 200~ms, and shortening the window further yields diminishing returns.
\begin{table}[t]
  \centering
  \caption{Model-extrapolated searcher bid premium $\pi(n)$ (searcher minus control, gwei) below the
  200~ms window, by shape, with $n = 2000/B$ sub-auctions. The $200$~ms column is the calibration
  anchor \eqref{eq:anchor} and is common to all shapes; the sub-200~ms columns are extrapolation. The
  Fit column reports $\gamma$ (power) or the $\Lambda$ needed to fit the two anchors (hazard,
  logarithmic), against the measured $\Lambda = 1.56$.}
  \label{tab:structural}
  \begin{tabular}{lrrrrr}
    \toprule
     & \multicolumn{4}{c}{Window $B$ (ms) / sub-auctions $n$} & \\
    \cmidrule(lr){2-5}
    Shape & 200 / 10 & 125 / 16 & 100 / 20 & 50 / 40 & Fit \\
    \midrule
    \multicolumn{6}{l}{\emph{Panel A. Declared bid $b$}} \\
    \addlinespace
    Power $A\,n^{-\gamma}$          & 0.192 & 0.167 & 0.157 & 0.128 & $\gamma = 0.29$ \\
    Hazard $A(1-e^{-\Lambda/n})$    & 0.192 & 0.135 & 0.113 & 0.061 & $\Lambda = 7.2$ \\
    Log $A\log(1+\Lambda/n)$        & 0.192 & 0.158 & 0.142 & 0.098 & $\Lambda = 92$ \\
    \addlinespace
    \multicolumn{6}{l}{\emph{Panel B. Effective priority fee $p$}} \\
    \addlinespace
    Power $A\,n^{-\gamma}$          & 0.148 & 0.127 & 0.118 & 0.094 & $\gamma = 0.33$ \\
    Hazard $A(1-e^{-\Lambda/n})$    & 0.148 & 0.103 & 0.086 & 0.046 & $\Lambda = 6.3$ \\
    Log $A\log(1+\Lambda/n)$        & 0.148 & 0.118 & 0.105 & 0.069 & $\Lambda = 57$ \\
    \bottomrule
  \end{tabular}
\end{table}

\section{Extrapolation from measured competition}
\label{sec:routeb}

This section re-derives the sub-200~ms projection from primitives measured at the 200~ms cadence, so
the extrapolated shape is built from measured primitives rather than an assumed functional form. The data reject the Poisson assumption. Among captured
opportunities the competitor count is overdispersed, with a variance-to-mean ratio of $1.29$ against
the Poisson value of one. The coarser pool-block count is more dispersed still, at $2.50$.

\textbf{Competition response.} Transactions are matched to pools as in Section~\ref{sec:oppmatch}. For
each captured opportunity, a pool within a 2~s block, let $c$ be the number of distinct core searchers
contesting it and take the winning effective fee. The response
\begin{equation}
  \beta(c) \;=\; \mathbb{E}\big[\,p_{\text{win}} \ \big|\ c \text{ competitors, opportunity value}\,\big]
  \label{eq:beta}
\end{equation}
where $p_{\text{win}}$ is the winning transaction's effective fee, is measured at the 200~ms cadence
(Table~\ref{tab:beta}, Panel~A). The raw estimate
$\operatorname{mean}_{o:\, c_o = c}\, p_{\text{win},o}$ is the mean winning fee across all opportunities
contested by $c$ searchers, at any value. Higher-$c$ opportunities skew toward more valuable pools, so
the raw estimate conflates competition with value. The value-controlled estimate conditions on value,
proxied by the winning transaction's gas used $V_o$ binned into quartiles $q(V_o) \in \{1, \dots, 4\}$.
Within each quartile we take the mean winning fee at count $c$, then average the four quartile means
equally,
\begin{equation}
  \beta(c) \;=\; \frac{1}{4}\sum_{q=1}^{4}\
    \operatorname*{mean}_{o:\, c_o = c,\ q(V_o) = q}\, p_{\text{win},o},
  \label{eq:betavc}
\end{equation}
where $q(V_o)$ is the gas-used quartile of opportunity $o$. The two estimates nearly coincide, rising
from about $0.16$ to between $0.95$ and $1.03$~gwei (Table~\ref{tab:beta}). The slope therefore
reflects competition. Value composition does not drive it. Gas used is a coarse value proxy, and a
swap-amount proxy from the pool-level data is left for future work.

\textbf{Competitor law and thinning.} Among captured opportunities the competitor count $M$ is
overdispersed, with mean $\Lambda_0 = 1.43$ and negative-binomial shape $r = 4.9$ (below the $1.56$
all-opportunity mean of Appendix~\ref{app:extrap}, since captured opportunities drop the more contested
pool-blocks where no core searcher won),
\begin{equation}
  \mathbb{E}[M] = \Lambda_0, \qquad
  M \sim \text{NB}(\Lambda_0, r),
  \label{eq:nb}
\end{equation}
where $\operatorname{Var}(M) = \Lambda_0 + \Lambda_0^2/r$, and $r \to \infty$ recovers the Poisson
case. A Gamma-Poisson mixture thins, so at cadence $n$ each of the $n$ windows carries
$K_n \sim \text{NB}(\Lambda_0/n,\, r)$, the same shape with the mean split across windows. The expected
winning fee at cadence $n$, conditional on the window being contested, is
\begin{equation}
  \hat p(n) \;=\; \mathbb{E}\big[\,\beta(K_n)\ \big|\ K_n \ge 1\,\big]
             \;=\; \frac{\sum_{c \ge 1} \Pr(K_n = c)\,\beta(c)}{\Pr(K_n \ge 1)},
  \label{eq:routeb}
\end{equation}
where $K_n$ is the per-window competitor count at cadence $n$ and the conditioning restricts to
windows holding at least one searcher.

\textbf{Result.} Table~\ref{tab:routeb} evaluates \eqref{eq:routeb}. Two features stand out. The first
is the floor. With no rival, $\beta(1) = 0.164$~gwei is a value-capture bid, so
$\hat p(\infty) = \beta(1)$ is nonzero. This corrects the vanishing-premium assumption of
Section~\ref{sec:structural}. The second is that the two observed cadences are reproduced as
consistency checks. Built only from 200~ms primitives, the model returns $0.316$~gwei at
$n = 1$, matching the 2~s regime. It returns $0.181$~gwei at $n = 10$, close to the 200~ms level. Below
200~ms the predicted fee is nearly flat, from $0.181$~gwei at 200~ms to $0.169$~gwei at 50~ms, and from
$0.188$ to $0.172$~gwei under the value-controlled response. The decline is under $10\%$ and converges
on the $0.164$ floor, and the raw and value-controlled paths agree within $0.01$~gwei. The band is an
order of magnitude tighter than in Section~\ref{sec:structural} because the shape is measured rather
than fit.

At 200~ms the expected number of rival searchers in a single window is
only $\Lambda_0/10 = 0.14$, so most windows are uncontested and the winning bid already sits at the
value-capture floor $\beta(1)$. Shortening the window removes competition that is largely gone. The
assumption needed is competition-count invariance, that a searcher facing $c$ rivals bids the same
whether its window is 200 or 50~ms. This is weaker than assuming invariance to the window length
itself. Transporting $\beta(c)$ across window lengths remains an assumption. But Section~\ref{sec:theory}
shows the same searchers bid differently at 2~s and 200~ms (Table~\ref{tab:betac}), so this does not
hold across window lengths, and the projection is illustrative rather than exact.

\begin{table}[t]
  \centering
  \caption{Measured competition response $\beta(c)$, the winning effective fee (gwei) on an
  opportunity contested by $c$ core searchers at the 200~ms cadence. Value-controlled row averages
  $\beta(c)$ within gas-used quartiles.}
  \label{tab:beta}
  \begin{tabular}{lrrrrrr}
    \toprule
    Competitors $c$ & 1 & 2 & 3 & 4 & 5 & $\ge 6$ \\
    \midrule
    $\beta(c)$ raw              & 0.164 & 0.366 & 0.455 & 0.495 & 0.638 & 1.027 \\
    $\beta(c)$ value-controlled & 0.167 & 0.408 & 0.526 & 0.520 & 0.657 & 0.945 \\
    \bottomrule
  \end{tabular}
\end{table}

\begin{table}[t]
  \centering
  \caption{Predicted winning effective fee $\hat p(n)$ (gwei) by cadence from \eqref{eq:routeb}, with
  $n = 2000/B$ windows and expected rival searchers per window $\Lambda_0/n$. The $n = 1$ and $n = 10$
  rows are consistency checks against the 2~s and 200~ms regimes; $n = 16, 20, 40$ are the
  extrapolation. $\hat p(\infty) = \beta(1) = 0.164$.}
  \label{tab:routeb}
  \begin{tabular}{rrrrr}
    \toprule
    $B$ (ms) & windows $n$ & rivals/window $\Lambda_0/n$ & $\hat p(n)$ raw & $\hat p(n)$ value-ctrl \\
    \midrule
    2000 &  1 & 1.425 & 0.316 & 0.342 \\
     200 & 10 & 0.142 & 0.181 & 0.188 \\
     125 & 16 & 0.089 & 0.175 & 0.180 \\
     100 & 20 & 0.071 & 0.173 & 0.178 \\
      50 & 40 & 0.036 & 0.169 & 0.172 \\
    \bottomrule
  \end{tabular}
\end{table}

%% file: did_results.tex
\begin{tabular}{lrrrrr}
\toprule
outcome & searcher $\Delta$ & control $\Delta$ & DiD & SE & pre-trend $p$ \\
\midrule
effective priority fee (gwei) & -0.1680 & +0.0001 & -0.1869$^{*}$ & 0.0380 & 0.785 \\
declared bid (gwei) & -0.1818 & +0.0008 & -0.2276$^{*}$ & 0.0445 & 0.976 \\
overbid (gwei) & -0.0004 & +0.0007 & -0.0004$^{*}$ & 0.0001 & 0.000 \\
revert rate & +0.0021 & -0.0031 & +0.0208$^{*}$ & 0.0047 & 0.000 \\
tx per address-day & +137.6210 & -18.0975 & +55.1013 & 25.1451 & 0.000 \\
fee concentration $C$ & -0.702 & +0.018 & -0.719$^{*}$ & 0.004 & 0.450 \\
\bottomrule
\end{tabular}

%% file: livepull_table.tex
\begin{tabular}{lr}
\toprule
 & August (post) \\
\midrule
Cross-sender jumps per block & 18.5 \\
Return: single preemption (resumes below) & 33.2\% (6.1/block) \\
Elevated: burst or boundary (stays high) & 66.4\% (12.3/block) \\
\bottomrule
\end{tabular}

%% file: did_robust.tex
\begin{tabular}{lrr}
\toprule
outcome & DiD (activity-matched) & DiD (retail) \\
\midrule
effective priority fee (gwei) & -0.1869 & -0.1786 \\
declared bid (gwei) & -0.2276 & -0.2175 \\
revert rate & +0.0208 & +0.0324 \\
fee concentration $C$ & -0.719 & -0.300 \\
\bottomrule
\end{tabular}